\documentclass[a4paper,11pt]{article}
\usepackage{jheppub} 
\usepackage{empheq}
\usepackage{bm}
\usepackage{slashed}
\usepackage[T1]{fontenc} 
\usepackage[all]{xy}
\usepackage{rotating}
\usepackage{amsmath,amscd,amsthm,amssymb,amsfonts} 
\usepackage{mathtools,braket}
\usepackage{bbm}
\usepackage{dcolumn}
\usepackage{multirow}
\usepackage{mathrsfs}
\usepackage{tikz,tikz-cd,tikz-3dplot}
\usetikzlibrary{arrows,arrows.meta,braids,calc,decorations.markings,shapes.geometric,fit}
\usetikzlibrary{fadings}
\usetikzlibrary{patterns}
\usetikzlibrary{shadows.blur}
\usepackage{hyperref}
\usepackage{cleveref}
\crefname{figure}{Fig.}{Figs.}
\usepackage{graphicx,adjustbox}

\usetikzlibrary{perspective}

\usepackage[labelformat=simple]{subcaption}

\makeatletter
\newcommand*{\da@rightarrow}{\mathchar
  "0\hexnumber@\symAMSa 4B\!\!}
\newcommand*{\da@leftarrow}{\mathchar"0\hexnumber@\symAMSa 4C }
\newcommand*{\xdashrightarrow}[2][]{%
  \mathrel{%
    \mathpalette{\da@xarrow{#1}{#2}{}\da@rightarrow{\,}{}}{}%
  }%
}
\newcommand{\xdashleftarrow}[2][]{%
  \mathrel{%
    \mathpalette{\da@xarrow{#1}{#2}\da@leftarrow{}{}{\,}}{}%
  }%
}
\newcommand*{\da@xarrow}[7]{%
  \sbox0{$\ifx#7\scriptstyle\scriptscriptstyle\else\scriptstyle\fi#5#1#6\m@th$}%
  \sbox2{$\ifx#7\scriptstyle\scriptscriptstyle\else\scriptstyle\fi#5#2#6\m@th$}%
  \sbox4{$#7\dabar@\m@th$}%
  \dimen@=\wd0 %
  \ifdim\wd2 >\dimen@
    \dimen@=\wd2 %
  \fi
  \count@=2 %
  \def\da@bars{\dabar@\dabar@}%
  \@whiledim\count@\wd4<\dimen@\do{%
    \advance\count@\@ne
    \expandafter\def\expandafter\da@bars\expandafter{%
      \da@bars
      \dabar@ 
    }%
  }%
  \mathrel{#3}%
  \mathrel{%
    \mathop{\da@bars}\limits
    \ifx\\#1\\%
    \else
      _{\copy0}%
    \fi
    \ifx\\#2\\%
    \else
      ^{\copy2}%
    \fi
  }%
  \mathrel{#4}%
}
\makeatother

\tikzset{
  baseline=(current bounding box.center),
  x=28pt, y=28pt, font=\small,
  >=stealth,
  vertex/.style={inner sep=0pt},
  lvertex/.style={vertex, draw, fill= white, opacity=1, minimum size=9pt, font=\footnotesize},
  gnode/.style={lvertex, shape=circle},
  fnode/.style={lvertex, shape=rectangle},
  q->/.style={->, shorten >=0.5pt},
  c->/.style={q->, thick, blue},
  f->/.style={q->, thick, red, densely dashed},
}

\usetikzlibrary{decorations.markings}
\tikzset{
  ->-/.style={decoration={
  markings,
  mark=at position #1 with {\arrow{>}}},postaction={decorate}},
  -<-/.style={decoration={
  markings,
  mark=at position #1 with {\arrow{<}}},postaction={decorate}}
}
\newtheorem{theorem}{Theorem}[section]
\newtheorem{lemma}[theorem]{Lemma}
\newtheorem{proposition}[theorem]{Proposition}

\DeclareMathOperator*{\End}{End}
\DeclareMathOperator{\Tr}{Tr}
\let\Re\relax
\DeclareMathOperator{\Re}{Re}

\newcommand{\Z}{\mathbb{Z}}
\newcommand{\R}{\mathbb{R}}
\newcommand{\C}{\mathbb{C}}

\newcommand{\bfR}{\mathbf{R}}
\newcommand{\bfV}{\mathbf{V}}

\newcommand{\bb}{\mathsf{b}}
\newcommand{\pp}{\mathsf{p}}
\newcommand{\qq}{\mathsf{q}}
\newcommand{\psit}{\tilde{\psi}}
\newcommand{\xb}{\bar{x}}

\newcommand{\TT}{\mathsf{T}}
\newcommand{\TTb}{\overline{\mathsf{T}}}

\def\cube{
  \begin{scope}[scale=1.5]
    \node[gnode] (b) at (0,1,1) {$b$};
    \node[gnode] (c) at (1,0,1) {$c$};
    \node[gnode] (d) at (1,1,0) {$d$};
    \node[gnode] (h) at (1,1,1) {$h$};
    
    \node[gnode] (a) at (0,0,0) {$a$};
    \node[gnode] (e) at (1,0,0) {$e$};
    \node[gnode] (f) at (0,1,0) {$f$};
    \node[gnode] (g) at (0,0,1) {$g$};
  \end{scope}
}

\def\TEcubeLHSone{
  \begin{scope}[scale=1.5]
    \node[gnode] (b) at (0,1,1) {$b_1$};
    \node[gnode] (c) at (1,0,1) {$b_3$};
    \node[gnode] (d) at (1,1,0) {$b_2$};
    \node[gnode] (h) at (1,1,1) {$d$};
    
    \node[gnode] (a) at (0,0,0) {$a_4$};
    \node[gnode] (e) at (1,0,0) {$c_2$};
    \node[gnode] (f) at (0,1,0) {$c_1$};
    \node[gnode] (g) at (0,0,1) {$c_3$};  
  \end{scope}
}

\def\TEcubeLHStwo{
  \begin{scope}[scale=1.5]
    \node[gnode] (b) at (0,1,1) {$c_4$};
    \node[gnode] (c) at (1,0,1) {$d$};
    \node[gnode] (d) at (1,1,0) {$c_6$};
    \node[gnode] (h) at (1,1,1) {$b_4$};
  
    \node[gnode] (a) at (0,0,0) {$c_1$};
    \node[gnode] (e) at (1,0,0) {$b_2$};
    \node[gnode] (f) at (0,1,0) {$a_3$};
    \node[gnode] (g) at (0,0,1) {$b_1$};  
  \end{scope}
}

\def\TEcubeLHSthree{
  \begin{scope}[scale=1.5]
    \node[gnode] (b) at (0,1,1) {$a_2$};
    \node[gnode] (c) at (1,0,1) {$b_3$};
    \node[gnode] (d) at (1,1,0) {$b_4$};
    \node[gnode] (h) at (1,1,1) {$c_5$};
  
    \node[gnode] (a) at (0,0,0) {$b_1$};
    \node[gnode] (e) at (1,0,0) {$d$};
    \node[gnode] (f) at (0,1,0) {$c_4$};
    \node[gnode] (g) at (0,0,1) {$c_3$};  
  \end{scope}
}

\def\TEcubeLHSfour{
  \begin{scope}[scale=1.5]
    \node[gnode] (b) at (0,1,1) {$c_5$};
    \node[gnode] (c) at (1,0,1) {$c_2$};
    \node[gnode] (d) at (1,1,0) {$c_6$};
    \node[gnode] (h) at (1,1,1) {$a_1$};
  
    \node[gnode] (a) at (0,0,0) {$d$};
    \node[gnode] (e) at (1,0,0) {$b_2$};
    \node[gnode] (f) at (0,1,0) {$b_4$};
    \node[gnode] (g) at (0,0,1) {$b_3$};  
  \end{scope}
}

\def\TEcubeRHSone{
  \begin{scope}[scale=1.5]
      \node[gnode] (b) at (0,1,1) {$a_2$};
      \node[gnode] (c) at (1,0,1) {$a_4$};
      \node[gnode] (d) at (1,1,0) {$a_3$};
      \node[gnode] (h) at (1,1,1) {$d$};
    
      \node[gnode] (a) at (0,0,0) {$b_1$};
      \node[gnode] (e) at (1,0,0) {$c_1$};
      \node[gnode] (f) at (0,1,0) {$c_4$};
      \node[gnode] (g) at (0,0,1) {$c_3$};
  \end{scope}
}

\def\TEcubeRHStwo{
  \begin{scope}[scale=1.5]
    \node[gnode] (b) at (0,1,1) {$d$};
    \node[gnode] (c) at (1,0,1) {$c_2$};
    \node[gnode] (d) at (1,1,0) {$c_6$};
    \node[gnode] (h) at (1,1,1) {$a_1$};
  
    \node[gnode] (a) at (0,0,0) {$c_1$};
    \node[gnode] (e) at (1,0,0) {$b_2$};
    \node[gnode] (f) at (0,1,0) {$a_3$};
    \node[gnode] (g) at (0,0,1) {$a_4$};  
  \end{scope}
}

\def\TEcubeRHSthree{
  \begin{scope}[scale=1.5]
    \node[gnode] (b) at (0,1,1) {$a_2$};
    \node[gnode] (c) at (1,0,1) {$b_3$};
    \node[gnode] (d) at (1,1,0) {$a_1$};
    \node[gnode] (h) at (1,1,1) {$c_5$};
  
    \node[gnode] (a) at (0,0,0) {$a_4$};
    \node[gnode] (e) at (1,0,0) {$c_2$};
    \node[gnode] (f) at (0,1,0) {$d$};
    \node[gnode] (g) at (0,0,1) {$c_3$};  
  \end{scope}
}

\def\TEcubeRHSfour{
  \begin{scope}[scale=1.5]
    \node[gnode] (b) at (0,1,1) {$c_4$};
    \node[gnode] (c) at (1,0,1) {$c_5$};
    \node[gnode] (d) at (1,1,0) {$c_6$};
    \node[gnode] (h) at (1,1,1) {$b_4$};
  
    \node[gnode] (a) at (0,0,0) {$d$};
    \node[gnode] (e) at (1,0,0) {$a_1$};
    \node[gnode] (f) at (0,1,0) {$a_3$};
    \node[gnode] (g) at (0,0,1) {$a_2$};  
  \end{scope}
}

\def\Rzero{
  \draw[->] (c) -- (e);
  \draw[->] (e) -- (d);
  \draw[->] (f) -- (d);
  \draw[->] (b) -- (f);
  \draw[->] (g) -- (b);
  \draw[->] (g) -- (c);
  
  \draw[densely dashed, very thick, ->] (b) -- (d);
  \draw[densely dashed, very thick, ->] (c) -- (d);
  \draw[densely dashed, very thick, ->] (c) -- (b);
  
  \draw[densely dashed, ->] (b) -- (h);
  \draw[densely dashed, ->] (c) -- (h);
  \draw[densely dashed, ->] (h) -- (d);
  
  \draw[densely dashed, very thick, ->] (a) -- (h);
  
  \draw[->] (a) -- (b);
  \draw[->] (c) -- (a);
  \draw[->] (a) -- (d);
  \draw[very thick, ->] (a) -- (e);
  \draw[very thick, ->] (a) -- (f);
  \draw[very thick, ->] (g) -- (a);
}

\def\Rdzero{
  \draw[very thick, ->] (c) -- (e);
  \draw[->] (e) -- (d);
  \draw[->] (f) -- (d);
  \draw[->] (b) -- (f);
  
  \draw[densely dashed, ->] (b) -- (h);
  \draw[very thick, densely dashed, ->] (c) -- (h);
  \draw[densely dashed, ->] (h) -- (d);
  
  \draw[densely dashed, very thick, ->] (b) -- (d);
  \draw[densely dashed, ->] (c) -- (d);
  \draw[densely dashed, ->] (c) -- (b);
  \draw[densely dashed, very thick, ->] (c) -- (f);
  
  \draw[->] (a) -- (e);
  \draw[->] (a) -- (f);
  \draw[->] (g) -- (a);
  \draw[very thick, ->] (a) -- (b);
  \draw[->] (c) -- (a);
  \draw[very thick, ->] (a) -- (d);
  
  \draw[->] (g) -- (b);
  \draw[very thick, ->] (g) -- (c);
}

\def\Rddzero{
  \draw[densely dashed, ->] (b) -- (d);
  \draw[densely dashed, very thick, ->] (c) -- (d);
  \draw[densely dashed, ->] (c) -- (b);
  
  \draw[->] (c) -- (e);
  \draw[->] (e) -- (d);
  \draw[->] (f) -- (d);
  \draw[very thick, ->] (b) -- (f);
  
  \draw[very thick, densely dashed, ->] (b) -- (h);
  \draw[densely dashed, ->] (c) -- (h);
  \draw[densely dashed, ->] (h) -- (d);
  
  \draw[densely dashed, very thick, ->] (b) -- (e);
  
  \draw[->] (a) -- (e);
  \draw[->] (a) -- (f);
  \draw[->] (g) -- (a);
  \draw[very thick, ->] (g) -- (b);
  \draw[->] (g) -- (c);
  
  \draw[->] (b) -- (a);
  \draw[very thick, ->] (c) -- (a);
  \draw[very thick, ->] (a) -- (d);
}

\def\Rdddzero{
  \draw[->] (c) -- (e);
  \draw[->] (e) -- (d);
  \draw[->] (f) -- (d);
  \draw[->] (b) -- (f);
  
  \draw[densely dashed, ->] (b) -- (h);
  \draw[densely dashed, ->] (c) -- (h);
  \draw[densely dashed, ->] (h) -- (d);
  
  \draw[densely dashed, very thick, ->] (b) -- (d);
  \draw[densely dashed, very thick, ->] (c) -- (d);
  \draw[densely dashed, very thick, ->] (c) -- (b);
  \draw[densely dashed, very thick, ->] (a) -- (h);
  
  \draw[very thick, ->] (a) -- (e);
  \draw[very thick, ->] (a) -- (f);
  \draw[very thick, ->] (g) -- (a);
  \draw[->] (g) -- (b);
  \draw[->] (g) -- (c);
  \draw[->] (b) -- (a);
  \draw[->] (c) -- (a);
  \draw[->] (a) -- (d);
}

\def\Rone{
  \draw[->] (c) -- (e);
  \draw[->] (e) -- (d);
  \draw[->] (f) -- (d);
  \draw[->] (b) -- (f);
  \draw[->] (g) -- (b);
  \draw[->] (g) -- (c);
  
  \draw[very thick, densely dashed, ->] (b) -- (h);
  \draw[very thick, densely dashed, ->] (c) -- (h);
  \draw[very thick, densely dashed, ->] (h) -- (d);
  
  \draw[very thick, ->] (e) -- (f);
  \draw[very thick, ->] (g) -- (f);
  \draw[very thick, ->] (g) -- (e);
  \draw[densely dashed, ->] (e) -- (h);
  \draw[densely dashed, ->] (h) -- (f);
  \draw[densely dashed, ->] (g) -- (h);
  \draw[densely dashed, very thick, ->] (a) -- (h);
  
  \draw[->] (a) -- (e);
  \draw[->] (a) -- (f);
  \draw[->] (g) -- (a);
}

\def\Rdone{    
  \draw[densely dashed, very thick, ->] (e) -- (h);
  \draw[densely dashed, ->] (h) -- (f);
  \draw[densely dashed, very thick, ->] (g) -- (h);
  
  \draw[->] (c) -- (e);
  \draw[->] (e) -- (d);
  \draw[very thick, ->] (f) -- (d);
  \draw[very thick, ->] (b) -- (f);
  
  \draw[densely dashed, ->] (b) -- (h);
  \draw[densely dashed, ->] (c) -- (h);
  \draw[densely dashed, ->] (h) -- (d);
  
  \draw[very thick, densely dashed, ->] (c) -- (f);
  \draw[->] (e) -- (f);
  \draw[->] (g) -- (f);
    
  \draw[->] (a) -- (e);
  \draw[very thick, ->] (a) -- (f);
  \draw[->] (g) -- (a);
  
  \draw[->] (g) -- (b);
  \draw[->] (g) -- (c);
  \draw[very thick, ->] (g) -- (e);
}

\def\Rddone{
  \draw[very thick, ->] (c) -- (e);
  \draw[very thick, ->] (e) -- (d);
  \draw[->] (f) -- (d);
  \draw[->] (b) -- (f);
  
  \draw[densely dashed, ->] (b) -- (h);
  \draw[densely dashed, ->] (c) -- (h);
  \draw[densely dashed, ->] (h) -- (d);
  
  \draw[->] (e) -- (f);
  \draw[densely dashed, ->] (h) -- (e);
  \draw[very thick, densely dashed, ->] (h) -- (f);
  \draw[densely dashed, very thick, ->] (b) -- (e);
  
  \draw[very thick, ->] (a) -- (e);
  \draw[->] (a) -- (f);
  \draw[->] (g) -- (a);
  \draw[->] (g) -- (b);
  \draw[->] (g) -- (c);
  \draw[very thick, ->] (g) -- (f);
  \draw[->] (g) -- (e);
  \draw[very thick, densely dashed, ->] (g) -- (h);
}

\def\Rdddone{
  \draw[densely dashed, ->] (h) -- (e);
  \draw[densely dashed, ->] (h) -- (f);
  \draw[densely dashed, ->] (g) -- (h);
  
  \draw[->] (c) -- (e);
  \draw[->] (e) -- (d);
  \draw[->] (f) -- (d);
  \draw[->] (b) -- (f);
  
  \draw[very thick, densely dashed, ->] (b) -- (h);
  \draw[very thick, densely dashed, ->] (c) -- (h);
  \draw[very thick, densely dashed, ->] (h) -- (d);
  
  \draw[densely dashed, very thick, ->] (a) -- (h);
  
  \draw[->] (g) -- (b);
  \draw[->] (g) -- (c);
  
  \draw[very thick, ->] (e) -- (f);
  \draw[very thick, ->] (g) -- (f);
  \draw[very thick, ->] (g) -- (e);
  \draw[->] (a) -- (e);
  \draw[->] (a) -- (f);
  \draw[->] (g) -- (a);
}

\title{Solving the tetrahedron equation by Teichm\"uller TQFT}

\author[1]{Myungbo Shim,}
\author[2,1]{Xiaoyue Sun,}
\author[1]{Hao Ellery Wang,}
\author[1]{Junya Yagi}

\affiliation[1]{Yau Mathematical Sciences Center, Tsinghua University\\Beijing 100084, China}
\affiliation[2]{Beijing Institute of Mathematical Sciences and Applications
\\Beijing 101408, China}

\emailAdd{mbshim@tsinghua.edu.cn}
\emailAdd{sunxiaoyue@bimsa.cn}
\emailAdd{wh2022@tsinghua.edu.cn}
\emailAdd{junyagi@tsinghua.edu.cn}

\abstract{We propose an approach to construct three-dimensional lattice models
  using line defects in state integral models on shaped triangulations
  of $3$-manifolds.  The Boltzmann weights for these models satisfy a
  variant of the tetrahedron equation, which implies integrability
  under suitable assumptions on R-matrices and transfer matrices.  As
  an explicit example, we present a solution produced by Teichm\"uller
  TQFT.}

\begin{document}

\maketitle

\section{Introduction}

The tetrahedron equation, introduced by Zamolodchikov
\cite{MR611994b, Zamolodchikov:1981kf}, is a three-dimensional analog
of the Yang--Baxter equation and plays a fundamental role in
three-dimensional integrable lattice models and $(2+1)$-dimensional
integrable quantum field theories. Compared with the Yang--Baxter
equation, however, the tetrahedron equation is substantially more
complicated, and our understanding of it remains somewhat limited.  For
recent developments, see e.g.\ \cite{Gavrylenko:2020eov,
  Yagi:2022tot, MR4616664, Sun:2022mpy, Inoue:2023vtx, Inoue:2023rer,
  Inoue:2024swb, Padmanabhan:2024zma, Yagi:2024tac, MR4850776,
  Inoue:2025xzj, BKMS, IK}.

In this work, we present a geometric construction of solutions to a
variant of the tetrahedron equation, consisting of a pair of equations
which we call the bicolored tetrahedron equations (BTEs), using state
integral models on shaped triangulations of $3$-manifolds.%
\footnote{The BTEs have appeared in the literature under the name
  ``modified tetrahedron equation''; see
  e.g.~\cite{Mangazeev:1993fu, Mangazeev:1993qw}. We prefer the term “bicolored,” which
  more transparently reflects the structure of the modification.}
Solutions of the BTEs can be used to construct three-dimensional
bipartite lattice models, which are integrable if the solutions have
suitable properties.

Our construction is based on the following three observations:
\begin{itemize}
\item For interaction-round-a-cube (IRC) models \cite{MR696804}, the
  tetrahedron equation admits a graphical interpretation as the
  equivalence of two decompositions of the rhombic dodecahedron into
  four cubes, with each cube corresponding to an R-matrix, as
  illustrated in Figure~\ref{fig:RD}.

\item In topological quantum field theories (TQFTs) of Turaev--Viro
  type \cite{Turaev:1992hq}, the partition function is defined using a
  triangulation of the spacetime $3$-manifold but is independent of the
  choice of triangulation.

\item In state integral models such as Teichm\"uller TQFT
  \cite{MR3227503, AK2}, which are defined on triangulated
  $3$-manifolds equipped with an additional structure called a shape
  structure, one can introduce line defects around which the total
  angle is not equal to $2\pi$.
\end{itemize}
Geometrically, what we do is to consider shaped triangulations of the
rhombic dodecahedron representing each side of each BTE, and
demonstrate that the two sides of each BTE are connected by a sequence
of shaped $2$--$3$ moves.

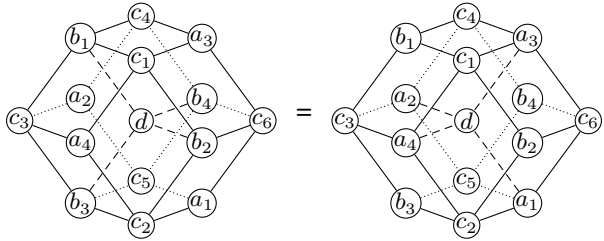
\begin{figure}[t]
  \begin{equation}
    \tdplotsetmaincoords{60}{270}
    \begin{tikzpicture}[tdplot_main_coords, scale=0.8]
    \node[gnode] (d) at (0,0,0) {$d$};

    \node[gnode] (a1) at (-1,-1,-1) {$a_1$};
    \node[gnode] (a2) at (-1,1,1) {$a_2$};
    \node[gnode] (a3) at (1,-1,1) {$a_3$};
    \node[gnode] (a4) at (1,1,-1) {$a_4$};
    
    \node[gnode] (b1) at (1,1,1) {$b_1$};
    \node[gnode] (b2) at (1,-1,-1) {$b_2$};
    \node[gnode] (b3) at (-1,1,-1) {$b_3$};
    \node[gnode] (b4) at (-1,-1,1) {$b_4$};

    \node[gnode] (c1) at (2,0,0) {$c_1$};
    \node[gnode] (c2) at (0,0,-2) {$c_2$};
    \node[gnode] (c3) at (0,2,0) {$c_3$};
    \node[gnode] (c4) at (0,0,2) {$c_4$};
    \node[gnode] (c5) at (-2,0,0) {$c_5$};
    \node[gnode] (c6) at (0,-2,0) {$c_6$};
    
    \draw[densely dotted] (b3) -- (c5);
    \draw (b3) -- (c3);
    \draw (b3) -- (c2);
    \draw (b1) -- (c4);
    \draw(b1) -- (c1);
    \draw(b1) -- (c3);
    \draw(b2) -- (c6);
    \draw(b2) -- (c1);
    \draw(b2) -- (c2);
    \draw[densely dotted] (b4) -- (c6);
    \draw[densely dotted] (b4) -- (c5);
    \draw[densely dotted] (b4) -- (c4);
    \draw (a3) -- (c6);
    \draw (a3) -- (c4);
    \draw (a3) -- (c1);
    \draw(a1) -- (c6);
    \draw[densely dotted] (a1) -- (c5);
    \draw(a1) -- (c2);
    \draw[densely dotted] (a2) -- (c5);
    \draw[densely dotted] (a2) -- (c4);
    \draw[densely dotted] (a2) -- (c3);
    \draw (a4) -- (c1);
    \draw (a4) -- (c3);
    \draw (a4) -- (c2);
    \draw[densely dashed] (b3) -- (d);
    \draw[densely dashed] (b1) -- (d);
    \draw[densely dashed] (b2) -- (d);
    \draw[densely dashed] (b4) -- (d);
  \end{tikzpicture}
  \ = \
  \begin{tikzpicture}[tdplot_main_coords, scale=0.8]
    \node[gnode] (d) at (0,0,0) {$d$};

    \node[gnode] (a1) at (-1,-1,-1) {$a_1$};
    \node[gnode] (a2) at (-1,1,1) {$a_2$};
    \node[gnode] (a3) at (1,-1,1) {$a_3$};
    \node[gnode] (a4) at (1,1,-1) {$a_4$};
    
    \node[gnode] (b1) at (1,1,1) {$b_1$};
    \node[gnode] (b2) at (1,-1,-1) {$b_2$};
    \node[gnode] (b3) at (-1,1,-1) {$b_3$};
    \node[gnode] (b4) at (-1,-1,1) {$b_4$};

    \node[gnode] (c1) at (2,0,0) {$c_1$};
    \node[gnode] (c2) at (0,0,-2) {$c_2$};
    \node[gnode] (c3) at (0,2,0) {$c_3$};
    \node[gnode] (c4) at (0,0,2) {$c_4$};
    \node[gnode] (c5) at (-2,0,0) {$c_5$};
    \node[gnode] (c6) at (0,-2,0) {$c_6$};
    
    \draw[densely dotted] (b3) -- (c5);
    \draw(b3) -- (c3);
    \draw(b3) -- (c2);
    \draw(b1) -- (c4);
    \draw(b1) -- (c1);
    \draw(b1) -- (c3);
    \draw(b2) -- (c6);
    \draw(b2) -- (c1);
    \draw(b2) -- (c2);
    \draw[densely dotted] (b4) -- (c6);
    \draw[densely dotted] (b4) -- (c5);
    \draw[densely dotted] (b4) -- (c4);
    \draw(a3) -- (c6);
    \draw(a3) -- (c4);
    \draw(a3) -- (c1);
    \draw(a1) -- (c6);
    \draw[densely dotted] (a1) -- (c5);
    \draw(a1) -- (c2);
    \draw[densely dotted] (a2) -- (c5);
    \draw[densely dotted] (a2) -- (c4);
    \draw[densely dotted] (a2) -- (c3);
    \draw(a4) -- (c1);
    \draw(a4) -- (c3);
    \draw(a4) -- (c2);
    \draw[densely dashed] (a3) -- (d);
    \draw[densely dashed] (a1) -- (d);
    \draw[densely dashed] (a2) -- (d);
    \draw[densely dashed] (a4) -- (d);
  \end{tikzpicture}\nonumber
\end{equation}
\caption{The tetrahedron equation for IRC models.}
\label{fig:RD}
\end{figure}

A similar idea was proposed by Maillet \cite{Maillet:1993av}, who
related the tetrahedron equation to the pentagon identity for an
operator represented by a tetrahedron.  From this viewpoint, the
tetrahedron equation arises as the condition for trivial holonomy of
the parallel transport operator from three adjacent faces of a cube
consisting of six tetrahedra to the opposite three faces.  However,
when this idea is implemented using Turaev--Viro TQFTs, the resulting
partition function is a topological invariant and therefore does not
encode the size of the lattice.%
\footnote{Besides Maillet's work, there have been several solutions to
  the tetrahedron equation and its variants based on the pentagon
  identity of quantum dilogarithms. See, e.g., \cite{MR1338071,
    Maillard:1997ypg, BKMS}.}

The crucial difference between our construction and Maillet's
construction applied to Turaev--Viro TQFTs is that when our R-matrix
is used to build a three-dimensional cubic lattice, the corresponding
triangulated manifold necessarily contains a network of line defects.
Consequently, the associated partition function is not a topological
invariant and depends on the size of the lattice.

The organization of this paper is as follows.  

In section~\ref{sec:BLM}, we discuss three-dimensional bipartite
lattice models and their integrability.  We explain how the BTEs
imply integrability if the transfer matrix satisfies a certain
nondegeneracy condition (Lemma~\ref{lemma:one-dimensional
  eigenspace2}).

In section~\ref{sec:SIM-BTE}, we present the geometric construction of
solutions to the BTEs.  For our choice of triangulations, it is
necessary to introduce defects in the cubes and the rhombic dodecahedra.
We identify a choice of shape structures (Figure~\ref{fig:cubes} and
Figure~\ref{fig:parameters}) that solves the BTEs
(Proposition~\ref{prop:BTEs-cubes}).

In section~\ref{sec:Teichmuller}, we prove that the R-matrices
produced by Teichm\"uller TQFT exactly solve the BTEs.

\section{Three-dimensional bipartite lattice models}
\label{sec:BLM}

In this section, we discuss classical spin models on bipartite cubic
lattices and their integrability.

\subsection{Spin models on bipartite cubic lattices}

A \emph{bipartite graph} is a graph whose vertex is colored either
black or white, $[0]$ or $[1] \in \Z/2\Z$.  Each edge connects exactly
one black vertex and one white vertex.  In what follows, we will
consider classical spin models on a periodic bipartite cubic lattice
of size $2L \times 2M \times 2N$, where the two vertex colors
represent two kinds of interactions.

The lattice consists of three sets of planes intersecting in a
$3$-torus parametrized by coordinates $(x^1, x^2, x^3)$ with periodic
identification
$(x^1, x^2, x^3) \sim (x^1 + 2L, x^2, x^3) \sim (x^1, x^2 + 2M, x^3)
\sim (x^1, x^2, x^3 + 2N)$. We take these planes to be
\begin{itemize}
\item planes $1_l$ located at $x^1 = l$, $l = 1$, \dots, $2L$;

\item planes $2_m$ located at $x^2 = m$, $m = 1$, \dots, $2M$; and

\item planes $3_n$ located at $x^3 = n$, $n = 1$, \dots, $2N$.
\end{itemize}
We order the planes lexicographically:
\begin{equation*}
  1_1 < 1_2 < \dotsb < 1_{2L}
  < 2_1 < 2_2 < \dotsb < 2_{2M}
  < 3_1 < 3_2 < \dotsb < 3_{2N} \,.
\end{equation*}
The intersections of planes $\alpha$ and $\beta$ with $\alpha < \beta$
are lines $(\alpha\beta)$.  The intersections of planes $\alpha$,
$\beta$ and $\gamma$ with $\alpha < \beta < \gamma$ are vertices
$(\alpha\beta\gamma)$.  The lines are oriented toward vertices of
larger lexicographic order.  We assign color $[l + m + n]$ to vertex
$(1_l 2_m 3_n)$.  See Figure~\ref{fig:lattice}.

\begin{figure}
  \centering
  \begin{subfigure}{0.25\textwidth}
    \tdplotsetmaincoords{-75}{-35}
    \begin{tikzpicture}[tdplot_main_coords, scale=2, font=\scriptsize]
      \draw[fill=blue, opacity=0.2]
      (-0.1,0.5,-0.1) -- (-0.1,0.5,1.1) node[xshift=-2pt, yshift=-2pt, opacity=1] {$3_n$}
      -- (1.1,0.5,1.1) -- (1.1,0.5,-0.1) -- cycle;
      
      \draw[fill=red, opacity=0.2]
      (0.5,-0.1,-0.1) -- (0.5,1.1,-0.1) node[xshift=-4pt, yshift=-2pt, opacity=1] {$1_l$}
      -- (0.5,1.1,1.1) -- (0.5,-0.1,1.1) -- cycle;
      
      \draw[fill=green, opacity=0.2]
      (-0.1,-0.1,0.5) -- (1.1,-0.1,0.5) node[xshift=-2pt, yshift=6pt, opacity=1] {$2_m$}
      -- (1.1,1.1,0.5) -- (-0.1,1.1,0.5) -- cycle;
      
      \draw[->, thick] (-0.2,0.5,0.5) node[left, xshift=2pt, yshift=-4pt] {$(2_m 3_n)$} -- (1.2,0.5,0.5);
      \draw[->, thick] (0.5,-0.2,0.5) node[below right, xshift=-4pt, yshift=2pt] {$(1_l 2_m)$} -- (0.5,1.2,0.5);
      \draw[->, thick] (0.5,0.5,-0.2) node[above, yshift=-2pt] {$(1_l 3_n)$} -- (0.5,0.5,1.2);

      \node[draw, thick, fill=black, shape=circle, inner sep=2pt] at (0.5,0.5,0.5) {};

    \end{tikzpicture}
    
    \caption{}
    \label{fig:vertex}
  \end{subfigure}
  \hfill
  \begin{subfigure}{0.65\textwidth}
  \centering
  \begin{tikzpicture}[3d view={40}{25}, scale=1, font=\scriptsize]
    \begin{scope}[shift={(0.5,0,-1.3)}, scale=0.5]
      \draw [->] (0,0,0) -- (1,0,0) node[xshift=5pt, yshift=1pt] {$x^1$};
      \draw [->] (0,0,0) -- (0,1,0) node[xshift=5pt, yshift=3pt] {$x^2$};
      \draw [->] (0,0,0) -- (0,0,1) node[xshift=2pt, yshift=4pt] {$x^3$};
    \end{scope}

    \foreach \l in {1,2,3,5} {
      \draw[->, thick] (\l,0.5,0)
      node[xshift=-8pt, yshift=-6pt] {$(1_{\ifnum\l=5 2L\else\inteval\l\fi} 3_n)$}
      -- (\l,5.5,0);
      \draw[very thick, white] (\l,3.5,0) -- (\l,4.5,0);
      \draw[very thick, dotted] (\l,3.6,0) -- (\l,4.4,0);
    }
    
    \foreach \m in {1,2,3,5} {
      \draw[->, thick] (0.5,\m,0)
      node[xshift=-10pt, yshift=6pt] {$(2_{\ifnum\m=5 2M\else\inteval{\m}\fi} 3_n)$}
      -- (5.5,\m,0);
      \draw[very thick, white] (3.5,\m,0) -- (4.5,\m,0);
      \draw[very thick, dotted] (3.6,\m,0) -- (4.4,\m,0);
    }

    \foreach \l in {2,3,5} {
      \foreach \m [evaluate={\inteval{\l+\m}} as \lpm] in {1,2,3} {
        \draw[->, thick] (\l,\m,-0.5) -- (\l,\m,0.5);
        \ifodd\lpm
          \node[gnode, minimum size=5pt, fill=black] at (\l,\m,0) {};
        \else
          \node[gnode, minimum size=5pt, fill=white] at (\l,\m,0) {};
        \fi
      }
    }
    \foreach \l in {2,3} {
        \draw[->, thick] (\l,5,-0.5) -- (\l,5,0.5);
        \ifodd\l
          \node[gnode, minimum size=5pt, fill=black] at (\l,5,0) {};
        \else
          \node[gnode, minimum size=5pt, fill=white] at (\l,5,0) {};
        \fi
    }
    
        \foreach \l in {5} {
      \foreach \m [evaluate={\inteval{\l+\m}} as \lpm] in {1,2,3} {
        \draw[->, thick] (\l,\m,-0.5) -- (\l,\m,0.5);
        \ifodd\lpm
          \node[gnode, minimum size=5pt, fill=white] at (\l,\m,0) {};
        \else
          \node[gnode, minimum size=5pt, fill=black] at (\l,\m,0) {};
        \fi
      }
    }
 \foreach \l in {5} {
      \foreach \m [evaluate={\inteval{\l+\m}} as \lpm] in {5} {
        \draw[->, thick] (\l,\m,-0.5) -- (\l,\m,0.5);
        \ifodd\lpm
          \node[gnode, minimum size=5pt, fill=black] at (\l,\m,0) {};
        \else
          \node[gnode, minimum size=5pt, fill=white] at (\l,\m,0) {};
        \fi
      }
    }

    \foreach \m in {1,2,3} {
      \draw[->, thick] (1,\m,-0.5) -- (1,\m,0.5);
      \ifodd\m
        \node[gnode, minimum size=5pt, fill=white] at (1,\m,0) {};
      \else
        \node[gnode, minimum size=5pt, fill=black] at (1,\m,0) {};
      \fi
    }
    
    \draw[->, thick] (1,5,-0.5) -- (1,5,0.5) node[gnode, minimum size=5pt, fill=black] at (1,5,0) {};
  \end{tikzpicture}
  
  \caption{}
  \label{fig:T}
\end{subfigure}

\caption{(a) A neighborhood of vertex $(1_l 2_m 3_n)$ with $l+m+n$
  even.  (b) A neighborhood of plane $3_n$ with $n$ odd.}
  \label{fig:lattice}
\end{figure}
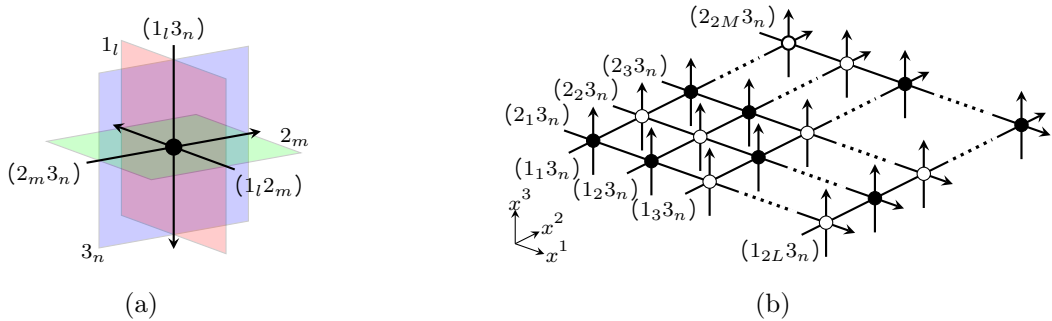

Spin models are statistical mechanics models of interacting state
variables (or ``spins'') located in space according to some patterns.
Different types of spin models can be constructed on the bipartite
cubic lattice depending on where spins are located and how they
interact.  We will consider models in which classical spins interact
locally at the vertices of the lattice.  Spins may be located on
\begin{itemize}
\item the \emph{edges} formed by the intersection of the planes;

\item the \emph{faces} on the planes bounded by the edges; and

\item the \emph{regions} in the $3$-torus bounded by the planes.
\end{itemize}
A single vertex is surrounded by 26 spins (6 on edges, 12 on faces,
and 8 in regions).  The local Boltzmann weight $W$ describing their
interaction is a function of the vertex color and the values of the 26
spins, as well as the \emph{spectral parameters} assigned to the three
planes meeting at the vertex:
\begin{equation*}
  W\colon \Z_2 \times C^3
  \times R_e^6 \times  R_f^{12} \times  R_r^8
  \to \C \,.
\end{equation*}
Here, $C$ is the set of possible values of spectral parameters, and
$R_e$, $R_f$ and $R_r$ are the sets in which spins on edges, faces,
and regions take values, respectively.  The Boltzmann weight of the
model is the product of local Boltzmann weights over all vertices.

A model that has spins only on the edges is called a \emph{vertex
  model}.  Suppose that in a vertex model the six spins adjacent to
vertex $(\alpha\beta\gamma)$ of color $[\sigma]$ take values $i$, $j$,
$k$, $i'$, $j'$, $k'$, as shown in Figure~\ref{fig:spins-vertex}.
Following the standard convention, we will write
\begin{equation}
  \label{eq:R}
  R^{[\sigma]}(r_\alpha, r_\beta, r_\gamma)_{ijk}^{i'j'k'}
\end{equation}
to denote the local Boltzmann weight for this configuration, where
$r_\alpha$, $r_\beta$, $r_\gamma$ are the spectral parameters of
planes $\alpha$, $\beta$, $\gamma$.

A model on a cubic lattice with spins living only on the regions is
traditionally called an \emph{Interaction-Round-a-Cube (IRC) model}.
The eight spins at the corners of a cube interact around the cube.
The local Boltzmann weight around vertex $(\alpha\beta\gamma)$ in an
IRC model is denoted by
\begin{equation*}
  W^{[\sigma]}(a|efg|bcd|h; r_\alpha, r_\beta, r_\gamma) \,,
\end{equation*}
where the eight spins with values $a$, $b$, $c$, $d$, $e$, $f$, $g$,
$h$ are arranged as in Figure~\ref{fig:spins-IRC}.

  


\begin{figure}
  \centering
  \begin{subfigure}{0.3\textwidth}
    \tdplotsetmaincoords{-75}{-35}
    \begin{tikzpicture}[tdplot_main_coords, scale=1.8, font=\scriptsize]
      \draw[->, thick] (-0.2,0.5,0.5) -- (1.2,0.5,0.5);
      \draw[->, thick] (0.5,-0.2,0.5) -- (0.5,1.2,0.5);
      \draw[->, thick] (0.5,0.5,-0.2) -- (0.5,0.5,1.2);

      \draw[->, thick] (-0.2,0.5,0.5) node[left, xshift=2pt, yshift=-4pt] {$(\beta \gamma)$} -- (1.2,0.5,0.5);
      \draw[->, thick] (0.5,-0.2,0.5) node[below right, xshift=-4pt, yshift=2pt] {$(\alpha \beta)$} -- (0.5,1.2,0.5);
      \draw[->, thick] (0.5,0.5,-0.2) node[above, yshift=-2pt] {$(\alpha \gamma)$} -- (0.5,0.5,1.2);

      \node[draw, thick, fill=black, shape=circle, inner sep=1.5pt] at (0.5,0.5,0.5) {};

      \node[gnode] at (0.5,0.1,0.5) {$i$};
      \node[gnode] at (0.5,0.9,0.5) {$i'$};

      \node[gnode] at (0.5,0.5,0.1) {$j$};
      \node[gnode] at (0.5,0.5,0.9) {$j'$};

      \node[gnode] at (0.1,0.5,0.5) {$k$};
      \node[gnode] at (0.9,0.5,0.5) {$k'$};
    \end{tikzpicture}
    
    \caption{}
    \label{fig:spins-vertex}
  \end{subfigure}
  \qquad
  \begin{subfigure}{0.2\textwidth}
    \tdplotsetmaincoords{-75}{-35}
    \begin{tikzpicture}[tdplot_main_coords, scale=1.5]
      \node[gnode] (b) at (0,1,1) {$b$};
      \node[gnode] (c) at (1,0,1) {$c$};
      \node[gnode] (d) at (1,1,0) {$d$};
      \node[gnode] (h) at (1,1,1) {$h$};
      
      \draw[densely dotted] (b) -- (h);
      \draw[densely dotted] (c) -- (h);
      \draw[densely dotted] (d) -- (h);

      \draw[fill=blue, opacity=0.2]
      (-0.1,0.5,-0.1) -- (-0.1,0.5,1.1) node[xshift=-2pt, yshift=-2pt, opacity=1] {$\gamma$}
      -- (1.1,0.5,1.1) -- (1.1,0.5,-0.1) -- cycle;
      
      \draw[fill=red, opacity=0.2]
      (0.5,-0.1,-0.1) -- (0.5,1.1,-0.1) node[xshift=-4pt, yshift=-2pt, opacity=1] {$\alpha$}
      -- (0.5,1.1,1.1) -- (0.5,-0.1,1.1) -- cycle;
      
      \draw[fill=green, opacity=0.2]
      (-0.1,-0.1,0.5) -- (1.1,-0.1,0.5) node[xshift=-2pt, yshift=6pt, opacity=1] {$\beta$}
      -- (1.1,1.1,0.5) -- (-0.1,1.1,0.5) -- cycle;
      
      \draw[->, thick] (-0.2,0.5,0.5) -- (1.2,0.5,0.5);
      \draw[->, thick] (0.5,-0.2,0.5) -- (0.5,1.2,0.5);
      \draw[->, thick] (0.5,0.5,-0.2) -- (0.5,0.5,1.2);

      \node[draw, thick, fill=black, shape=circle, inner sep=1.5pt] at (0.5,0.5,0.5) {};

      \node[gnode] (a) at (0,0,0) {$a$};
      \node[gnode] (e) at (1,0,0) {$e$};
      \node[gnode] (f) at (0,1,0) {$f$};
      \node[gnode] (g) at (0,0,1) {$g$};
      
      \draw[black!50] (a) -- (e);
      \draw[black!50] (a) -- (f);
      \draw[black!50] (a) -- (g);
      \draw[black!50] (e) -- (c);
      \draw[black!50] (e) -- (d);
      \draw[black!50] (f) -- (d);
      \draw[black!50] (f) -- (b);
      \draw[black!50] (g) -- (b);
      \draw[black!50] (g) -- (c);
    \end{tikzpicture}
    
    \caption{}
    \label{fig:spins-IRC}
  \end{subfigure}

\caption{Spins around vertex $(\alpha\beta\gamma)$ in (a) a vertex
  model and (b) an IRC model.}
  \label{fig:spins}
\end{figure}
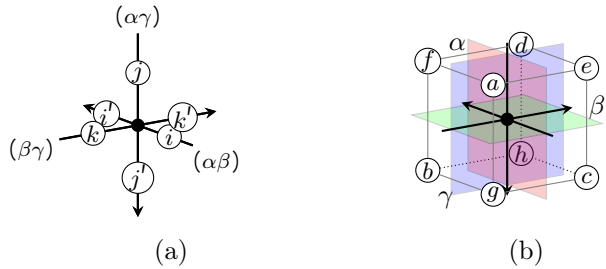

Given a model of the type described above, we can always reformulate
it as a vertex model.  This can be done in two steps.

First, we replace the spin in each region with six spins and
distribute them to the surrounding six faces.  Now spins are only on
the edges and the faces.  Each face has three spins (the original spin
plus the two coming from the adjacent regions), and we combine them
into a single spin valued in $\widetilde{S}_f = R_f \times R_r^2$.
Using this new set of spins, we define a model whose partition
function is equal to the partition function of the original model: if
a spin configuration can arise from that in the original model, we
assign to it the corresponding original Boltzmann weight; otherwise,
we set the Boltzmann weight to zero.

Next, we convert the model just obtained into a vertex model.  We
replace the spin on each face with four spins and distribute them to
the four surrounding edges.  Then, all spins are located on edges, and
they take values in $\widetilde{S}_e = R_e \times \widetilde{S}_f^4$.
Using these spins, we can define a vertex model equivalent to the
model constructed in the previous paragraph, hence to the original
model, by choosing the local Boltzmann weight appropriately.

\subsection{Bicolored tetrahedron equations and integrability}

As explained above, any model of the type considered in the present
paper can be reformulated as a vertex model on the bipartite cubic
lattice.  Introducing the vector space
\begin{equation}
  V = \bigoplus_i \C \ket{i} \,,
\end{equation}
define an operator-valued function
$R\colon \Z_2 \times C^3 \to \End(V \otimes V \otimes V)$ by
\begin{equation}
  R^{[\sigma]}(r_\alpha, r_\beta, r_\gamma) \ket{i} \otimes \ket{j} \otimes \ket{k}
  =
  \sum_{i',j',k'} R^{[\sigma]}(r_\alpha, r_\beta, r_\gamma)_{ijk}^{i'j'k'}
  \ket{i'} \otimes \ket{j'} \otimes \ket{k'}
  \,.
\end{equation}
In the following we discuss conditions on $R$ that imply the
integrability of this vertex model.

Suppose that the model is defined on a periodic
$2L \times 2M \times 2N$ bipartite cubic lattice, which we think of as
consisting of $2N$ layers of periodic $2L \times 2M$ bipartite square
lattices that are parallel to the $x^1$-$x^2$ plane and stacked in the $x^3$
direction.  Take one of those layers.  Its neighborhood is illustrated
in Figure~\ref{fig:T}.  Let $V_{(\alpha\beta)}$ denote the copy of
vector space $V$ assigned to the line $(\alpha\beta)$, and let
$R^{[\sigma]}_{(\alpha\beta\gamma)}$ be the operator acting on
$V_{(\alpha\beta)} \otimes V_{(\alpha\gamma)} \otimes
V_{(\beta\gamma)}$ by $R^{[\sigma]}(r_\alpha, r_\beta, r_\gamma)$.

    



Consider the intersections of the planes $1_1$, \dots, $1_{2L}$ and
two planes $\alpha$, $\beta$.  The Boltzmann weight for spin
configurations in their neighborhood is given by
\begin{equation}
  \bfR^{[\sigma]}_{\alpha\beta}
  :=
  \Tr_{V_{(\alpha\beta)}}\biggl(\prod_l R^{[\sigma+l]}_{(1_l\alpha\beta)}\biggr)
  \in
  \End(\bfV_\alpha \otimes \bfV_\beta)
  \,,
\end{equation}
where
\begin{equation}
  \bfV_\alpha := \bigotimes_l V_{(1_l\alpha)} \,.
\end{equation}
This defines an operator
$\bfR\colon \Z_2 \times C^{2L} \times C^2 \to \End(\bfV \otimes \bfV)$
with $\bfV = V^{\otimes 2L}$, called the \emph{trace reduction} of
$R$.
%

Similarly, the neighborhood of plane $3_n$ defines the \emph{layer
  transfer matrix}
$T \colon \Z_2 \times C \to \End(\bigotimes_{l,m} V_{(1_l 2_m)})$ by
\begin{equation}
  \label{eq:T}
  T^{[\sigma]}(r_{3_n})
  :=
  \Tr_{\bigotimes_{l,m} V_{(1_l 3_n)} \otimes V_{(2_m 3_n)}}\biggl(
  \prod_{m,n} R^{[\sigma+l+m]}_{(1_l 2_m 3_n)}\biggr)
  =
  \Tr_{\bfV_{3_n}}\biggl(\prod_m \bfR^{[\sigma+m]}_{2_m 3_n}\biggr)
  \,,
\end{equation}
where the operator product is taken in the lexicographic order
consistent with the orientation of lines in Figure~\ref{fig:T}.
Using the layer transfer matrix, the partition function of the model
can be expressed as
\begin{equation}
  Z
  =
  \Tr_{\bigotimes_{l,m} V_{(1_l 2_m)}}\biggl(\prod_n T^{[n]}(r_{3_n})\biggr)
  \,.
\end{equation}

Now, suppose that $R$, $\dot{R}$, $\ddot{R}$,
$\dddot{R} \colon \Z_2 \times C^3 \to \End(V^{\otimes 3})$ satisfy the
\emph{bicolored tetrahedron equations}
\begin{multline}
  \label{eq:BTE}
  R^{[\sigma]}_{(123)}(r_1,r_2,r_3)
  \dot{R}^{[\sigma+1]}_{(124)}(r_1,r_2,r_4)
  \ddot{R}^{[\sigma]}_{(134)}(r_1,r_3,r_4)
  \dddot{R}^{[\sigma+1]}_{(234)}(r_2,r_3,r_4)
  \\
  =
  \dddot{R}^{[\sigma]}_{(234)}(r_2,r_3,r_4)
  \ddot{R}^{[\sigma+1]}_{(134)}(r_1,r_3,r_4)
  \dot{R}^{[\sigma]}_{(124)}(r_1,r_2,r_4)
  R^{[\sigma+1]}_{(123)}(r_1,r_2,r_3)
  \,.
\end{multline}
We refer to \eqref{eq:BTE} as BTE[$\sigma$].

\begin{lemma}
  If $\dddot{R}$ is invertible, then $\bfR$, $\dot\bfR$, $\ddot\bfR$
  satisfy the \emph{bicolored Yang--Baxter equation}
  \begin{equation}
    \label{eq:BYBE}
    \bfR^{[\sigma]}_{23}
    \dot{\bfR}^{[\sigma+1]}_{24}
    \ddot{\bfR}^{[\sigma]}_{34}
    =
    \ddot{\bfR}^{[\sigma+1]}_{34}
    \dot{\bfR}^{[\sigma]}_{24}
    \bfR^{[\sigma+1]}_{23}
    \,.
  \end{equation}
\end{lemma}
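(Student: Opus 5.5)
Following the standard argument that passes from the tetrahedron equation to the Yang--Baxter equation for the trace reduction, I will take planes $2$, $3$, $4$ and the $2L$ planes $1_1,\dots,1_{2L}$. I apply BTE[$\sigma+l$] with the plane $1$ replaced by $1_l$, for each $l$, and string these equations together along the lines $(1_l 2)$, $(1_l3)$, $(1_l4)$. The operator $\dddot R_{(234)}$ acts only on $V_{(23)}\otimes V_{(24)}\otimes V_{(34)}$, which are exactly the spaces that are traced over in $\bfR_{23}$, $\dot\bfR_{24}$, $\ddot\bfR_{34}$.

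First I form the product over $l=1,\dots,2L$ (in lexicographic order) of the left-hand side operators $R^{[\sigma+l]}_{(1_l23)}\dot R^{[\sigma+l+1]}_{(1_l24)}\ddot R^{[\sigma+l]}_{(1_l34)}$, with $\dddot R_{(234)}$ inserted on one end. I then reorder using the fact that operators with disjoint supports commute: $R_{(1_l23)}$ and $\dot R_{(1_{l'}24)}$ share only $V_{(23)}$ or $V_{(24)}$ in the appropriate order, and so on. This produces
\[
\Bigl(\prod_l R^{[\sigma+l]}_{(1_l23)}\Bigr)\Bigl(\prod_l \dot R^{[\sigma+l+1]}_{(1_l24)}\Bigr)\Bigl(\prod_l \ddot R^{[\sigma+l]}_{(1_l34)}\Bigr)\dddot R^{[\sigma+2L+1]}_{(234)} .
\]
Next I move $\dddot R$ leftward through each block $1_l$, one at a time, using BTE[$\sigma+l$]. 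Each application swaps the order of the four factors and shifts every color by one. Since $2L$ is even, after passing through all $2L$ blocks the color of $\dddot R$ returns to $[\sigma+2L+1]=[\sigma+1]$, and the colors of the remaining factors become $[\sigma+l+1]$, $[\sigma+l]$, $[\sigma+l+1]$ respectively. This gives the identity
\[
\mathcal{A}\,\dddot R^{[\sigma+1]}_{(234)} = \dddot R^{[\sigma+1]}_{(234)}\,\mathcal{B},
\]
where $\mathcal{A}$ is the ordered product corresponding to $\bfR^{[\sigma]}_{23}\dot\bfR^{[\sigma+1]}_{24}\ddot\bfR^{[\sigma]}_{34}$ before the traces are taken, and $\mathcal{B}$ is the corresponding product for the right-hand side of \eqref{eq:BYBE}.

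Because $\dddot R$ is invertible, this becomes $\mathcal{A}=\dddot R\,\mathcal{B}\,\dddot R^{-1}$. Taking the trace over $V_{(23)}\otimes V_{(24)}\otimes V_{(34)}$ and using cyclicity then removes the conjugation. Since $\dddot R$ acts only on the traced spaces, cyclicity holds in the partial-trace sense. The partial trace of $\mathcal{A}$ equals $\bfR^{[\sigma]}_{23}\dot\bfR^{[\sigma+1]}_{24}\ddot\bfR^{[\sigma]}_{34}$, because each factor's trace is over its own auxiliary line, which is disjoint from the others. The same holds for $\mathcal{B}$, so this proves \eqref{eq:BYBE}.

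The main obstacle is the careful bookkeeping in the middle step. I need to check that the reorderings are legitimate, meaning that each $V_{(1_l\cdot)}$ is shared only within the $l$-th block, and that the operator orderings in the traced spaces match the definition of $\bfR$ as an ordered product. I also need the color parities to come out exactly as written in the trace reductions: the $l$-dependent shifts must match $R^{[\sigma+l]}$ in the definition of $\bfR^{[\sigma]}$, with $\dot R$ carrying $\sigma+1$. This is where the evenness of $2L$ is essential.
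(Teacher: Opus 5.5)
Your proposal is correct and follows essentially the same route as the paper: write the product of the three trace reductions as one partial trace over $V_{(23)}\otimes V_{(24)}\otimes V_{(34)}$ of the blockwise product, insert $\dddot{R}_{(234)}(\dddot{R}_{(234)})^{-1}$, push $\dddot{R}_{(234)}$ leftward through the $2L$ blocks using BTE[$\sigma+l$], and remove the conjugation by cyclicity of the partial trace. Your color bookkeeping is consistent: the inserted operator must be $\dddot{R}^{[\sigma+1]}_{(234)}$, and it returns to the same color because $2L$ is even. The paper's display writes this operator as $R^{[\sigma]}_{(234)}$, which should be read as $\dddot{R}^{[\sigma+1]}_{(234)}$.
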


\begin{proof}
  The left-hand side of \eqref{eq:BYBE} is equal to
  \begin{equation}
    \begin{aligned}
      &
      \Tr_{V_{(23)}}\biggl(\prod_{l} R^{[\sigma+l]}_{(1_l23)}\biggr)
      \Tr_{V_{(24)}}\biggl(\prod_{l} \dot{R}^{[\sigma+1+l]}_{(1_l24)}\biggr)
      \Tr_{V_{(34)}}\biggl(\prod_{l} \ddot{R}^{[\sigma+l]}_{(1_l34)}\biggr)
      \\
      &=
      \Tr_{V_{(23)} \otimes V_{(24)} \otimes V_{(34)}}
      \biggl(
      \Bigl(
      \prod_{l=1}^{2L}
      R^{[\sigma+l]}_{(1_l23)}
      \dot{R}^{[\sigma+1+l]}_{(1_l24)}
      \ddot{R}^{[\sigma+l]}_{(1_l34)}
      \Bigr)
      R^{[\sigma]}_{(234)}
      (R^{[\sigma]}_{(234)})^{-1}
      \biggr)
      \,.
    \end{aligned}
  \end{equation}
  By repeated use of the BTEs, we can rewrite it as
  \begin{equation}
    \label{eq:bYBE}
    \begin{aligned}
      &
      \Tr_{V_{(23)} \otimes V_{(24)} \otimes V_{(34)}}
      \biggl(
      R^{[\sigma]}_{(234)}
      \Bigl(
      \prod_{l}
      \ddot{R}^{[\sigma+l]}_{(1_l34)}
      \dot{R}^{[\sigma+1+l]}_{(1_l23)}
      R^{[\sigma+l]}_{(1_l23)}
      \Bigr)
      (R^{[\sigma]}_{(234)})^{-1}
      \biggr)
      \\
      &=
      \Tr_{V_{(23)} \otimes V_{(24)} \otimes V_{(34)}}
      \biggl(
      \prod_{l}
      \ddot{R}^{[\sigma+l]}_{(1_l34)}
      \dot{R}^{[\sigma+1+l]}_{(1_l23)}
      R^{[\sigma+l]}_{(1_l23)}
      \biggr)
      \\
      &=
      \Tr_{V_{(34)}}\biggl(\prod_{l} \ddot{R}^{[\sigma+1+l]}_{(1_l34)}\biggr)
      \Tr_{V_{(24)}}\biggl(\prod_{l} \dot{R}^{[\sigma+l]}_{(1_l24)}\biggr)
      \Tr_{V_{(23)}}\biggl(\prod_{l} R^{[\sigma+1+l]}_{(1_l23)}\biggr)
      \,,
    \end{aligned}
  \end{equation}
  which is the right-hand side of \eqref{eq:BYBE}.
\end{proof}

\begin{proposition}
  \label{prop:BTE}
  If $\dddot{R}$ and $\ddot{\bfR}$ are invertible, the layer transfer
  matrices $T$ of $R$ and $\dot{T}$ of $\dot{R}$ satisfy
  \begin{equation}
    \label{eq:TdTd=TT}
    T^{[\sigma]}(r) \dot{T}^{[\sigma+1]}(r')
    = \dot{T}^{[\sigma]}(r') T^{[\sigma+1]}(r) \,.
  \end{equation}
\end{proposition}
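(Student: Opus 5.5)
The plan is to repeat, one dimension up, the trace argument used in the proof of the preceding lemma, with the bicolored Yang--Baxter equation \eqref{eq:BYBE} in place of the BTEs.

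First, apply the lemma (which needs only invertibility of $\dddot R$) to get \eqref{eq:BYBE} for the trace reductions. The planes $1_1,\dots,1_{2L}$ now play the role of plane $1$. We take plane $2_m$ for index $2$, and two planes $3_n$, $3_{n'}$ carrying spectral parameters $r$, $r'$ for indices $3$ and $4$.

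Second, use \eqref{eq:T} to write both layer transfer matrices as traces over the $\bfV$ spaces of their planes:
\[
T^{[\sigma]}(r)\dot T^{[\sigma+1]}(r')
=\Tr_{\bfV_{3}\otimes\bfV_{4}}\Bigl(\prod_m \bfR^{[\sigma+m]}_{2_m3}\,\dot\bfR^{[\sigma+1+m]}_{2_m4}\Bigr).
\]
Here the two products are interleaved. This is allowed because $\bfR_{2_m3}$ and $\dot\bfR_{2_{m'}4}$ act on disjoint spaces whenever $m\neq m'$, and the partial traces are over different spaces.

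Third, insert $\ddot\bfR^{[\sigma]}_{34}(\ddot\bfR^{[\sigma]}_{34})^{-1}$ inside the trace, which uses invertibility of $\ddot\bfR$. Then push $\ddot\bfR_{34}$ through the product from one end to the other, using \eqref{eq:BYBE} once per factor $m$:
\[
\bfR^{[\tau]}_{2_m3}\dot\bfR^{[\tau+1]}_{2_m4}\ddot\bfR^{[\tau']}_{34}
=\ddot\bfR^{[\tau'+1]}_{34}\dot\bfR^{[\tau]}_{2_m4}\bfR^{[\tau+1]}_{2_m3}.
\]
Each crossing swaps the order of the $\bfR$ and $\dot\bfR$ factors and shifts colors by one.

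Finally, cancel $\ddot\bfR$ against its inverse by cyclicity of the trace over $\bfV_3\otimes\bfV_4$. Then separate the product back into the two traces, giving $\dot T^{[\sigma]}(r')T^{[\sigma+1]}(r)$.

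The main obstacle is the color bookkeeping in the third step. The lemma in its stated form requires the colors on $(\bfR,\dot\bfR,\ddot\bfR)$ to be $(\sigma,\sigma+1,\sigma)$, while in the product the color of $\ddot\bfR_{34}$ flips after each crossing and the colors of the $\bfR$, $\dot\bfR$ factors shift with $m$. I would first check that the two are consistent. The color of $\bfR_{2_m3}$ is $\sigma+m$, so at step $m$ I need $\ddot\bfR$ to carry color $\sigma+m$. Since $\ddot\bfR$ flips parity at each crossing and $m$ also increases by one, this should match provided the initial color of $\ddot\bfR$ is chosen correctly. The periodicity $2M$ ensures that $\ddot\bfR$ returns to its initial color after a full cycle, so it cancels against its inverse.

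The output colors work out as follows. The factor $\bfR_{2_m3}$ goes from color $\sigma+m$ to $\sigma+m+1$, reproducing $T^{[\sigma+1]}$. The factor $\dot\bfR_{2_m4}$ goes from color $\sigma+1+m$ to $\sigma+m$, reproducing $\dot T^{[\sigma]}$.

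A subsidiary point is that the operator order must match the lexicographic convention in \eqref{eq:T}. If the lemma's identity pushes $\ddot\bfR$ in the opposite direction from the one the ordering requires, I would instead start with $\ddot\bfR$ at the other end of the product and insert the inverse there.
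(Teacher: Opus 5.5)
Your proposal is correct and follows the paper's proof: interleave the two traces, insert $\ddot{\bfR}^{[\sigma]}_{34}(\ddot{\bfR}^{[\sigma]}_{34})^{-1}$ at the right end, push $\ddot{\bfR}$ leftward through the $2M$ factors using \eqref{eq:BYBE}, cancel it against its inverse by cyclicity of the partial trace over $\bfV_3\otimes\bfV_4$, and separate the traces again. The color bookkeeping you flagged works out as you expect: the rightmost factor $m=2M$ has color $\sigma+2M\equiv\sigma$, so $\ddot{\bfR}$ starts with color $\sigma$, every crossing matches, and $\ddot{\bfR}^{[\sigma]}$ reappears on the left after $2M$ crossings, which is exactly what the paper writes.
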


\begin{proof}
  Relation \eqref{eq:TdTd=TT} follows from the bicolored Yang--Baxter
  equation~\eqref{eq:bYBE}:
  \begin{equation}
    \begin{aligned}
      &
      \Tr_{\bfV_{3_{n}}}\biggl(\prod_m \bfR^{[\sigma+m]}_{2_m 3_n}\biggr)
      \Tr_{\bfV_{4_{n'}}}\biggl(\prod_m \dot{\bfR}^{[\sigma+1+m]}_{2_m 4_{n'}}\biggr)
      \\
      &=
      \Tr_{\bfV_{3_{n}} \otimes \bfV_{4_{n'}}}\biggl(
      \Bigl(
      \prod_m
      \bfR^{[\sigma+m]}_{2_m 3_n}
      \dot{\bfR}^{[\sigma+1+m]}_{2_m 4_{n'}}
      \Bigr)
      \ddot{\bfR}^{[\sigma]}_{3_n 4_{n'}}
      (\ddot{\bfR}^{[\sigma]}_{3_n 4_{n'}})^{-1}
      \biggr)
      \\
      &=
      \Tr_{\bfV_{3_{n}} \otimes \bfV_{4_{n'}}}\biggl(
      \ddot{\bfR}^{[\sigma]}_{3_n 4_{n'}}
      \Bigl(
      \prod_m
      \dot{\bfR}^{[\sigma+m]}_{2_m 4_{n'}}
      \bfR^{[\sigma+1+m]}_{2_m 3_n}
      \Bigr)
      (\ddot{\bfR}^{[\sigma]}_{3_n 4_{n'}})^{-1}
      \biggr)
      \\
      &=
      \Tr_{\bfV_{4_{n'}}}\biggl(\prod_m \dot{\bfR}^{[\sigma+m]}_{2_m 4_{n'}}\biggr)
      \Tr_{\bfV_{3_{n}}}\biggl(\prod_m \bfR^{[\sigma+1+m]}_{2_m 3_n}\biggr)
      \,.
    \end{aligned}
  \end{equation}
\end{proof}

Let
\begin{equation}
  \label{eq:twolayerT}
  \mathbf{T}(r, r') := T^{[0]}(r) T^{[1]}(r')
\end{equation}
be the combined transfer matrix for two adjacent layers.  By
Proposition~\ref{prop:BTE}, we have
\begin{equation}
  [\mathbf{T}(r, r'), \dot{\mathbf{T}}(s, s')] = 0 \,.
\end{equation}
This commutativity implies that the expansion of $\mathbf{T}$ and
$\dot{\mathbf{T}}$ in powers of spectral parameters produces two
families of operators such that any operator from one family commutes
with any operator from the other family.  This is different from the
integrability of the model defined by $R$ (or $\dot{R}$), unless the
two families coincide.

The following lemma provides a sufficient condition for the model to
be integrable:
\begin{lemma}
  \label{lemma:one-dimensional eigenspace2}
  
  If there exists $(t, t') \in C^2$ such that $\dot{\mathbf{T}}(t,t')$
  is diagonalizable with nondegenerate eigenvalues, then
  $[\mathbf{T}(r,r'), \mathbf{T}(s,s')] =0$ for any $(r,r')$,
  $(s,s') \in C^2$.
\end{lemma}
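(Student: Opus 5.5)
The plan is to use the standard linear-algebra fact that any operator commuting with a diagonalizable operator with nondegenerate spectrum is diagonal in its eigenbasis. Let $\mathbf{D} := \dot{\mathbf{T}}(t,t')$. By hypothesis there is a basis $\{\ket{v_k}\}$ of $\bigotimes_{l,m} V_{(1_l 2_m)}$ with $\mathbf{D}\ket{v_k} = \lambda_k \ket{v_k}$ and the $\lambda_k$ pairwise distinct. The relation $[\mathbf{T}(r,r'), \dot{\mathbf{T}}(s,s')] = 0$ was derived from Proposition~\ref{prop:BTE}, and we assume its hypotheses, that $\dddot{R}$ and $\ddot{\bfR}$ are invertible. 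Specializing it to $(s,s') = (t,t')$ gives $[\mathbf{T}(r,r'), \mathbf{D}] = 0$ for every $(r,r') \in C^2$.

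Next, for fixed $(r,r')$ and each $k$, we have $\mathbf{D}\,\mathbf{T}(r,r')\ket{v_k} = \mathbf{T}(r,r')\mathbf{D}\ket{v_k} = \lambda_k\, \mathbf{T}(r,r')\ket{v_k}$. Thus $\mathbf{T}(r,r')\ket{v_k}$ lies in the $\lambda_k$-eigenspace of $\mathbf{D}$. Nondegeneracy means this eigenspace is $\C\ket{v_k}$, so $\mathbf{T}(r,r')\ket{v_k} = \mu_k(r,r')\ket{v_k}$ for some scalar $\mu_k(r,r')$. Hence every $\mathbf{T}(r,r')$ is diagonal in the fixed, parameter-independent basis $\{\ket{v_k}\}$.

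Two operators diagonal in the same basis commute. Therefore $[\mathbf{T}(r,r'), \mathbf{T}(s,s')]\ket{v_k} = (\mu_k(r,r')\mu_k(s,s') - \mu_k(s,s')\mu_k(r,r'))\ket{v_k} = 0$ for all $k$. Since the $\ket{v_k}$ form a basis, the commutator vanishes.

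The only delicate point is the role of the hypotheses. The commutation $[\mathbf{T}, \dot{\mathbf{T}}] = 0$ rests on Proposition~\ref{prop:BTE}, applied twice, which we take as given. Beyond that, the argument requires finite dimensionality, or at least a genuine eigenbasis, so that "diagonalizable" yields a basis. If $V$ were infinite-dimensional, as for the continuous spins of Teichm\"uller TQFT, this step would need a spectral-theoretic reformulation. In the setting of the lemma, however, I would just invoke the finite-dimensional statement.
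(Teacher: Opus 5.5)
Your proof is correct and is essentially the paper's argument: use $[\mathbf{T}(r,r'),\dot{\mathbf{T}}(t,t')]=0$ and nondegeneracy to show that every eigenvector of $\dot{\mathbf{T}}(t,t')$ is an eigenvector of each $\mathbf{T}(r,r')$, then conclude that the commutator vanishes on an eigenbasis. Your explicit remarks about the invertibility hypotheses behind the commutation relation and about finite dimensionality are reasonable additions that the paper leaves implicit.
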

\begin{proof}
  Let $\ket{\Psi}$ be an eigenvector of $\dot{\mathbf{T}}(t,t')$ with
  eigenvalue $\lambda$.  Since $\dot{\mathbf{T}}(t,t')$ commutes with
  $\mathbf{T}(r,r')$, the vector $\mathbf{T}(r,r') \ket{\Psi}$ is zero
  or an eigenvector of $\dot{\mathbf{T}}(t,t')$ with eigenvalue
  $\lambda$.  By the assumption of nondegeneracy,
  $\mathbf{T}(r,r') \ket{\Psi}$ is a scalar multiple of $\ket{\Psi}$.
  Similarly, $\mathbf{T}(s,s') \ket{\Psi}$ is proportional to
  $\ket{\Psi}$, hence
  $[\mathbf{T}(r,r'), \mathbf{T}(s,s')] \ket{\Psi} = 0$.  By
  assumption, there is a basis consisting of eigenvectors of
  $\dot{\mathbf{T}}(t,t')$.  It follows that
  $[\mathbf{T}(r,r'), \mathbf{T}(s,s')] = 0$.
\end{proof}

\section{State integral models and the BTEs}
\label{sec:SIM-BTE}

\subsection{State integral models on shaped triangulations}

Consider a tetrahedron with totally ordered vertices $0$, $1$, $2$,
$3$, with each edge directed from a smaller to a larger vertex.  Up to
even permutations, there are two equivalence classes of vertex
orderings, corresponding to the two possible orientations of the
tetrahedron, which we call \emph{positive} and \emph{negative}.  We
say that the tetrahedron is given the shape of an ideal hyperbolic
tetrahedron if to each edge $e$ is assigned a positive real number
$\alpha(e)$, called the \emph{dihedral angle of $e$}, such that
\begin{equation}
  \alpha(01) = \alpha(23) =: \alpha \,,
  \quad
  \alpha(02) = \alpha(13) =: \beta \,,
  \quad
  \alpha(03) = \alpha(12) =: \gamma
\end{equation}
and
\begin{equation}
  \alpha + \beta + \gamma = \pi \,.
\end{equation}
See Figure~\ref{fig:tetrahedron} for an illustration of dihedral angles
and our convention for positive and negative orientations.

\begin{figure}
  \small
  \begin{equation*} 
  \begin{tikzpicture}[3d view={-40}{20}, scale=1.5, rotate=0, font=\scriptsize]
    \coordinate (0) at (0,0,1.2);
    \coordinate (3) at (1,0,0);
    \coordinate (1) at (-1/2,-{sqrt(3)/2},0); 
    \coordinate (2) at (-1/2,{sqrt(3)/2},0);
    
    \draw [->-=0.8] (2) -- node[pos=-0.05] {2} node[xshift=-4pt, sloped, above=-2pt] {$\alpha$} (3) node[pos=1.05] {3};

    \fill[black!10, opacity=0.6] (0) -- (3) -- (1) -- (2) -- cycle;

    \draw [->-=0.8] (0) node[above] {0}
    -- node[xshift=-4pt, sloped, above=-2pt] {$\alpha$} (1) node[pos=1.05] {1};
    \draw [->-=0.8] (0)
    -- node[xshift=0pt, sloped, above=-2pt] {$\beta$} (2);
    \draw [->-=0.8] (0)
    -- node[xshift=0pt, sloped, above=-2pt] {$\gamma$} (3);
    \draw [->-=0.8] (1) --  node[xshift=0pt, sloped, below=-2pt] {$\gamma$} (2);
    \draw [->-=0.8] (1)
    -- node[xshift=-4pt, sloped, below=-2pt] {$\beta$} (3);
  \end{tikzpicture}
  \qquad
  \begin{tikzpicture}[3d view={-40}{20}, scale=1.5, rotate=0, font=\scriptsize]
    \coordinate (0) at (0,0,1.2);
    \coordinate (2) at (1,0,0);
    \coordinate (1) at (-1/2,-{sqrt(3)/2},0); 
    \coordinate (3) at (-1/2,{sqrt(3)/2},0);
    
    \draw [->-=0.8] (2) -- node[pos=-0.05] {2} node[xshift=-4pt, sloped, above=-2pt] {$\alpha$} (3) node[pos=1.05] {3};

    \fill[black!10, opacity=0.6] (0) -- (3) -- (1) -- (2) -- cycle;

    \draw [->-=0.8] (0) node[above] {0}
    -- node[xshift=-4pt, sloped, above=-2pt] {$\alpha$} (1) node[pos=1.05] {1};
    \draw [->-=0.8] (0)
    -- node[xshift=0pt, sloped, above=-2pt] {$\beta$} (2);
    \draw [->-=0.8] (0)
    -- node[xshift=0pt, sloped, above=-2pt] {$\gamma$} (3);
    \draw [->-=0.8] (1) --  node[xshift=0pt, sloped, below=-2pt] {$\gamma$} (2);
    \draw [->-=0.8] (1)
    -- node[xshift=-4pt, sloped, below=-2pt] {$\beta$} (3);
  \end{tikzpicture}
  \end{equation*}

  \caption{The tetrahedral weights for a positive tetrahedron (left)
    and a negative tetrahedron (right).}
  \label{fig:tetrahedron}
\end{figure}
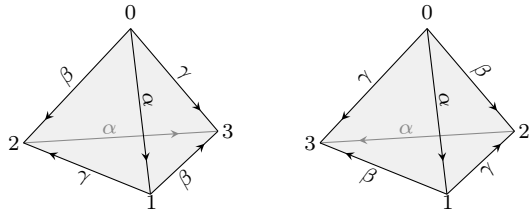

An \emph{oriented triangulated pseudo $3$-manifold} is a geometric
object constructed from finitely many tetrahedra with ordered vertices
by gluing them along faces while matching the directions of identified
edges.  The set of the tetrahedra glued together provides its
\emph{triangulation}.  A \emph{shape structure} of an oriented
triangulated pseudo $3$-manifold is an assignment of the shape of an
ideal hyperbolic tetrahedron to each tetrahedron in the triangulation.
An oriented pseudo $3$-manifold endowed with a shape structure is
called a \emph{shaped pseudo $3$-manifold}.  An internal edge of a
shaped pseudo $3$-manifold is \emph{balanced} if the total angle
around that edge (that is, the sum of the dihedral angles of the edge
in all tetrahedra containing it) is equal to $2\pi$.



A \emph{shape gauge transformation} with parameter $\theta$ performed
on edge $0v$ ($v \in \{1,2,3\}$) or $xy$ ($x$, $y \notin \{0,v\}$) of
a tetrahedron with ordered vertices $0$, $1$, $2$, $3$ is the shift of
dihedral angles
\begin{equation}
  \alpha(0w) \to \alpha(0w) \pm \theta \sum_{x=1}^3 \epsilon_{vwx} \,,
\end{equation}
where $\epsilon$ is a completely antisymmetric tensor with
$\epsilon_{123} = 1$ and the sign is determined by the orientation of
the tetrahedron.  A shape gauge transformation performed on an edge of
a shaped pseudo $3$-manifold is defined by the same action on all
tetrahedra containing that edge in the triangulation.  Shape gauge
transformations leave the total angles of internal edges invariant.

Consider a shaped bipyramid consisting of two tetrahedra.  A
\emph{shaped $2$--$3$ move} applied to this bipyramid is a change of
shaped triangulation to another one consisting of three tetrahedra in
a way that the total angle around each external edge is preserved
and the induced internal edge is balanced.%
\footnote{Shaped $2$--$3$ moves are bidirectional, but in some
  contexts it is convenient to reserve this term to be used for the
  direction from two to three tetrahedra, distinguishing them from
  ``$3$--$2$ moves'' which go in the opposite direction.}
For a shaped $2$--$3$ move to be applicable, there must be a total
ordering among the five vertices of the bipyramid consistent with the
directions of edges; any pair of vertices is contained in the same
tetrahedron either before or after the move is applied.  See
Figure~\ref{fig:2-3} for an illustration of a shaped $2$--$3$ move.
There is more than one shaped $2$--$3$ move that can be applied to a
given bipyramid, differing by shape gauge transformations on the
internal edge.

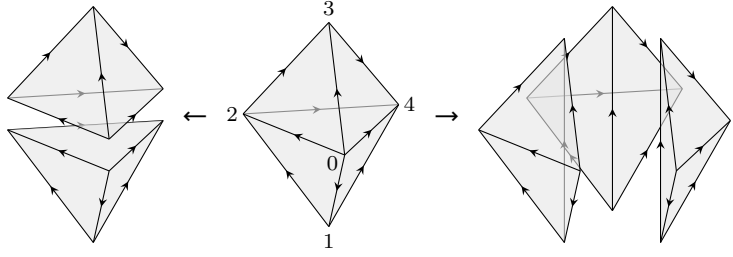
\begin{figure}
\begin{equation*}
  \begin{tikzpicture}[3d view={-40}{20}, scale=1.2, rotate=0, font=\scriptsize]
  \begin{scope}[yshift=-5pt]
    \coordinate (0) at (-1/2,-{sqrt(3)/2},0); 
    \coordinate (1) at (0,0,-1.2);
    \coordinate (2) at (-1/2,{sqrt(3)/2},0);
    \coordinate (3) at (0,0,1.2);
    \coordinate (4) at (1,0,0);
    
    \draw [->-=0.5] (2) -- (4);
    \fill[black!10, opacity=0.6] (1) -- (2) -- (4) -- cycle;
    \draw [->-=0.5] (0) -- (1);
    \draw [->-=0.5] (0) -- (2);
    \draw [->-=0.5] (0) -- (4);
    \draw [->-=0.5] (1) -- (2);
    \draw [->-=0.5] (1) -- (4);
  \end{scope}
  \begin{scope}[yshift=5pt]
    \coordinate (0) at (-1/2,-{sqrt(3)/2},0); 
    \coordinate (1) at (0,0,-1.2);
    \coordinate (2) at (-1/2,{sqrt(3)/2},0);
    \coordinate (3) at (0,0,1.2);
    \coordinate (4) at (1,0,0);
    
    \draw [->-=0.5] (2) -- (4);
    \fill[black!10, opacity=0.6] (0) -- (2) -- (3) -- (4) -- cycle;
    \draw [->-=0.5] (0) -- (2);
    \draw [->-=0.5] (0) -- (3);
    \draw [->-=0.5] (0) -- (4);
    \draw [->-=0.5] (2) -- (3);
    \draw [->-=0.5] (3) -- (4);
  \end{scope}
\end{tikzpicture}
\ \leftarrow
\begin{tikzpicture}[3d view={-40}{20}, scale=1.2, rotate=0, font=\scriptsize]
  \coordinate (0) at (-1/2,-{sqrt(3)/2},0); 
  \coordinate (1) at (0,0,-1.2);
  \coordinate (2) at (-1/2,{sqrt(3)/2},0);
  \coordinate (3) at (0,0,1.2);
  \coordinate (4) at (1,0,0);
  
  \draw [->-=0.5] (2) -- node[pos=-0.07] {$2$} (4) node[pos=1.07] {$4$};
  \fill[black!10, opacity=0.6] (1) -- (2) -- (3) -- (4) -- cycle;
  \draw [->-=0.5] (0) -- (1) node[below=-1pt] {$1$};
  \draw [->-=0.5] (0) -- (2);
  \draw [->-=0.5] (0) node[xshift=-4.5pt, yshift=-3pt] {$0$} -- (3) node[above=-1pt] {$3$};
  \draw [->-=0.5] (0) -- (4);
  \draw [->-=0.5] (1) -- (2);
  \draw [->-=0.5] (1) -- (4);
  \draw [->-=0.5] (2) -- (3);
  \draw [->-=0.5] (3) -- (4);
\end{tikzpicture}
\rightarrow\
\begin{tikzpicture}[3d view={-40}{20}, scale=1.2, rotate=0, font=\scriptsize]
  \begin{scope}[yshift=10pt]
    \coordinate (0) at (-1/2,-{sqrt(3)/2},0); 
    \coordinate (1) at (0,0,-1.2);
    \coordinate (2) at (-1/2,{sqrt(3)/2},0);
    \coordinate (3) at (0,0,1.2);
    \coordinate (4) at (1,0,0);

    \draw [->-=0.5] (2) -- (4);
    \fill[black!10, opacity=0.6] (1) -- (2) -- (3) -- (4) -- cycle;
    \draw [->-=0.5] (1) -- (2);
    \draw [->-=0.5] (1) -- (4);
    \draw [->-=0.5] (1) -- (3);
    \draw [->-=0.5] (2) -- (3);
    \draw [->-=0.5] (3) -- (4);
  \end{scope}

  \begin{scope}[xshift=15pt]
    \coordinate (0) at (-1/2,-{sqrt(3)/2},0); 
    \coordinate (1) at (0,0,-1.2);
    \coordinate (2) at (-1/2,{sqrt(3)/2},0);
    \coordinate (3) at (0,0,1.2);
    \coordinate (4) at (1,0,0);

    \fill[black!10, opacity=0.6] (1) -- (3) -- (4) -- cycle;

    \draw [->-=0.5] (0) -- (1);
    \draw [->-=0.5] (0) -- (3);
    \draw [->-=0.5] (0) -- (4);
    \draw [->-=0.5] (1) -- (4);
    \draw [->-=0.5] (1) -- (3);
    \draw [->-=0.5] (3) -- (4);
  \end{scope}

  \begin{scope}[xshift=-15pt]
    \coordinate (0) at (-1/2,-{sqrt(3)/2},0); 
    \coordinate (1) at (0,0,-1.2);
    \coordinate (2) at (-1/2,{sqrt(3)/2},0);
    \coordinate (3) at (0,0,1.2);
    \coordinate (4) at (1,0,0);

    \draw [->-=0.5] (1) -- (3);
    \fill[black!10, opacity=0.6] (1) -- (2) -- (3) -- (0) -- cycle;
    \draw [->-=0.5] (0) -- (1);
    \draw [->-=0.5] (0) -- (2);
    \draw [->-=0.5] (0) -- (3);
    \draw [->-=0.5] (1) -- (2);
    \draw [->-=0.5] (2) -- (3);
  \end{scope}
\end{tikzpicture}
\end{equation*}

\caption{An example of a shaped $2$--$3$ move.}
  \label{fig:2-3}
\end{figure}

We will say two shaped pseudo $3$-manifolds are \emph{equivalent} if
they are related by a sequence of shaped $2$--$3$ moves and shape gauge
transformations on internal edges.

For the purpose of the present work, by a \emph{state integral model
  on shaped triangulations} we mean a statistical mechanics model
defined on shaped pseudo $3$-manifolds such that its partition
functions on equivalent shaped pseudo $3$-manifolds are equal.  Such a
model assigns a local Boltzmann weight to a configuration of state
variables located on a single tetrahedron.  Typically, state variables
are continuous and located on the edges and faces.  State variables on
internal edges and faces are integrated over.

\subsection{BTEs from state integral models}

We wish to solve the BTEs geometrically by representing the eight
R-matrices $R^{[\sigma]}$, $\dot{R}^{[\sigma]}$,
$\ddot{R}^{[\sigma]}$, $\dddot{R}^{[\sigma]}$, $\sigma = 0$, $1$, with
shaped cubes.  From this geometric point of view, each of the BTEs is
a statement that two shaped rhombic dodecahedra, obtained from two
sets of four shaped cubes glued together, are equivalent.  See
Figure~\ref{fig:RD}.

\tdplotsetmaincoords{-70}{-52}

We choose to triangulate each cube by six tetrahedra as in
Figure~\ref{fig:cubes}.  (The meaning of line thickness will be
explained shortly.) For this choice, the rhombic dodecahedra appearing
in BTE[0] and BTE[1] consist of the cubes shown in
Figure~\ref{fig:BTE0-cubes} and Figure~\ref{fig:BTE1-cubes}, respectively.

\def\arraystretch{1.8}

\begin{figure}
  \centering
\begin{equation*}
  \begin{array}{cccc}
    \begin{tikzpicture}[tdplot_main_coords]
      \cube
      \Rzero
    \end{tikzpicture}
    &
      \begin{tikzpicture}[tdplot_main_coords]
        \cube
        \Rdzero
      \end{tikzpicture}
    &
      \begin{tikzpicture}[tdplot_main_coords]
        \cube
        \Rddzero
      \end{tikzpicture}
    &
      \begin{tikzpicture}[tdplot_main_coords]
        \cube
        \Rdddzero
      \end{tikzpicture}
    \\
    R^{[0]} & \dot{R}^{[0]}
    & \ddot{R}^{[0]} & \dddot{R}^{[0]}
    \\
    \begin{tikzpicture}[tdplot_main_coords]
      \cube    
      \Rone
    \end{tikzpicture}
    &
      \begin{tikzpicture}[tdplot_main_coords]
        \cube
        \Rdone
      \end{tikzpicture}
    &
      \begin{tikzpicture}[tdplot_main_coords]
        \cube
        \Rddone
      \end{tikzpicture}
    &
      \begin{tikzpicture}[tdplot_main_coords]
        \cube
        \Rdddone
      \end{tikzpicture}
    \\
    R^{[1]} & \dot{R}^{[1]}
    & \ddot{R}^{[1]} & \dddot{R}^{[1]}
  \end{array}
\end{equation*}

\caption{The shaped triangulations of cubes}
\label{fig:cubes}
\end{figure}
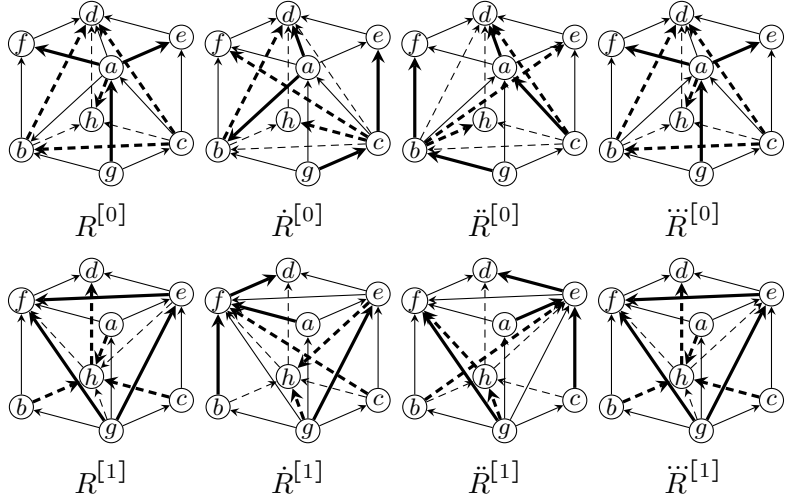

\begin{figure}
  \centering
  \begin{equation*}
    \begin{array}{cccc}
      \begin{tikzpicture}[tdplot_main_coords]
        \TEcubeLHSone
        \Rzero
      \end{tikzpicture}
      &
        \begin{tikzpicture}[tdplot_main_coords]
          \TEcubeLHStwo    
          \Rdone
        \end{tikzpicture}
      &
        \begin{tikzpicture}[tdplot_main_coords]
          \TEcubeLHSthree
          \Rddzero
        \end{tikzpicture}
      &
        \begin{tikzpicture}[tdplot_main_coords]
          \TEcubeLHSfour
          \Rdddone
        \end{tikzpicture}
      \\
      R^{[0]} & \dot{R}^{[1]} & \ddot{R}^{[0]} & \dddot{R}^{[1]}
      \\
      \begin{tikzpicture}[tdplot_main_coords]
        \TEcubeRHSone
        \Rdddzero
      \end{tikzpicture}
      &
        \begin{tikzpicture}[tdplot_main_coords]
          \TEcubeRHStwo
          \Rddone
        \end{tikzpicture}
      &
        \begin{tikzpicture}[tdplot_main_coords]
          \TEcubeRHSthree
          \Rdzero
        \end{tikzpicture}
      &
        \begin{tikzpicture}[tdplot_main_coords]
          \TEcubeRHSfour
          \Rone
        \end{tikzpicture}
      \\
      \dddot{R}^{[0]} & \ddot{R}^{[1]} & \dot{R}^{[0]} &  R^{[1]}
    \end{array}
  \end{equation*}
  
  \caption{The cubes for the shaped rhombic dodecahedra on the left-hand
    side (top) and the right-hand side (bottom) of BTE[0].}
  \label{fig:BTE0-cubes}
\end{figure}

\begin{figure}
  \centering
  \begin{equation*}
    \begin{array}{cccc}
      \begin{tikzpicture}[tdplot_main_coords]
        \TEcubeLHSone
        \Rone
      \end{tikzpicture}
      &
        \begin{tikzpicture}[tdplot_main_coords]
          \TEcubeLHStwo    
          \Rdzero
        \end{tikzpicture}
      &
        \begin{tikzpicture}[tdplot_main_coords]
          \TEcubeLHSthree
          \Rddone
        \end{tikzpicture}
      &
        \begin{tikzpicture}[tdplot_main_coords]
          \TEcubeLHSfour
          \Rdddzero
        \end{tikzpicture}
      \\
      R^{[1]} & \dot{R}^{[0]} & \ddot{R}^{[1]} & \dddot{R}^{[0]}
      \\
      \begin{tikzpicture}[tdplot_main_coords]
        \TEcubeRHSone
        \Rdddone
      \end{tikzpicture}
      &
        \begin{tikzpicture}[tdplot_main_coords]
          \TEcubeRHStwo
          \Rddzero
        \end{tikzpicture}
      &
        \begin{tikzpicture}[tdplot_main_coords]
          \TEcubeRHSthree
          \Rdone
        \end{tikzpicture}
      &
        \begin{tikzpicture}[tdplot_main_coords]
          \TEcubeRHSfour
          \Rzero
        \end{tikzpicture}
      \\
      \dddot{R}^{[1]} & \ddot{R}^{[0]} & \dot{R}^{[1]} &  R^{[0]}
    \end{array}
  \end{equation*}

  \caption{The cubes for the shaped rhombic dodecahedra on the left-hand
    side (top) and the right-hand side (bottom) of BTE[1].}
  \label{fig:BTE1-cubes}
\end{figure}

We must find shape structures for these triangulated cubes for which
the BTEs hold.  One may hope to make all body diagonal edges (such as
edge $ah$ in $R^{[0]}$) balanced so that for each $\sigma$, all
$R^{[\sigma]}$, $\dot{R}^{[\sigma]}$, $\ddot{R}^{[\sigma]}$,
$\dddot{R}^{[\sigma]}$ can be transformed by shaped 2-3 moves to have
the same shaped triangulation with five tetrahedra.  Unfortunately,
this is not possible.  Looking at the tetrahedra containing the body
diagonal edges on the left-hand side of BTE[0], we deduce
\begin{equation}
  \label{eq:omega-da}
  \sum_{i=1}^4 \omega(da_i) = 12\pi - \sum_{i=1}^4 \omega(db_i) \,,
\end{equation}
where $\omega(da_i)$ and $\omega(db_i)$ denote the total angle around
the edges $da_i$ and $db_i$.  The edges $db_1$, $db_2$, $db_3$,
$db_4$, however, are body diagonal on the right-hand side of BTE[0].
Therefore, some body diagonal edges must be unbalanced.

The constraint \eqref{eq:omega-da} is satisfied if we take the total
angles around the body diagonal edges to be all equal to $3\pi/2$.  A
symmetric choice is the following angle assignment which uses only
tetrahedra whose angles are $\pi/2$, $\pi/4$, $\pi/4$.  As we will see
shortly, the BTEs are satisfied if we choose, for each tetrahedron,
the edges drawn by thick lines in Figure~\ref{fig:cubes} to have
dihedral angle $\pi/2$.

In order to introduce parameters to the R-matrices, we can make use of
shape gauge transformations.  To the eight cubes in BTE[0], let us
apply the shape gauge transformations with parameters
\begin{itemize}
\item $s_i$ on the edges parallel to $da_i$ and $db_i$;
\item $t_{ij}$, $i < j$, on $a_ia_j$ and $b_ib_j$; and
\item $u_{ij}$, $i < j$, $\{i,j\} \sqcup \{k, l\} = \{1,2,3,4\}$,
  on $a_kb_l$ and $a_lb_k$.
\end{itemize}
This results in deformed shaped cubes that define R-matrices depending
on parameters.  For example, $R^{[0]}$ is deformed by the shaped gauge
transformations with parameters assigned to the edges as in
Figure~\ref{fig:parameters}.  The deformed $R^{[0]}$ is a function of
the nine parameters $s_1$, $s_2$, $s_3$, $t_{12}$, $t_{13}$, $t_{23}$,
$u_{12}$, $u_{13}$, $u_{23}$; it is independent of $s_4$ since this
parameter is assigned to the internal edge $ah$.

\begin{figure}
  \centering
  \begin{tikzpicture}[tdplot_main_coords, scale=1.5]
    \begin{scope}[scale=1.5]
      \node[gnode, opacity=0.2] (b) at (0,1,1) {$b$};
      \node[gnode, opacity=0.2] (c) at (1,0,1) {$c$};
      \node[gnode, opacity=0.2] (d) at (1,1,0) {$d$};
      \node[gnode, opacity=0.2] (h) at (1,1,1) {$h$};
      
      \node[gnode, opacity=0.2] (a) at (0,0,0) {$a$};
      \node[gnode, opacity=0.2] (e) at (1,0,0) {$e$};
      \node[gnode, opacity=0.2] (f) at (0,1,0) {$f$};
      \node[gnode, opacity=0.2] (g) at (0,0,1) {$g$};
    \end{scope}
    
    \draw[opacity=0.2, ->] (a) -- node[text opacity=1, sloped] {$s_1$} (e);
    \draw[opacity=0.2, ->] (a) -- node[text opacity=1, sloped] {$s_3$} (f);
    \draw[opacity=0.2, ->] (g) -- node[text opacity=1, sloped] {$s_2$} (a);
    \draw[opacity=0.2, ->] (c) -- node[text opacity=1, sloped] {$s_2$} (e);
    \draw[opacity=0.2, ->] (e) -- node[text opacity=1, sloped] {$s_3$} (d);
    \draw[opacity=0.2, ->] (f) -- node[text opacity=1, sloped] {$s_1$} (d);
    \draw[opacity=0.2, ->] (b) -- node[text opacity=1, sloped] {$s_2$} (f);
    \draw[opacity=0.2, ->] (g) -- node[text opacity=1, sloped] {$s_3$} (b);
    \draw[opacity=0.2, ->] (g) -- node[text opacity=1, sloped] {$s_1$} (c);
    
    \draw[opacity=0.2, densely dashed, ->] (b) -- node[text opacity=1,
    sloped] {$s_1$} (h); \draw[opacity=0.2, densely dashed, ->] (c) --
    node[text opacity=1, sloped] {$s_3$} (h); \draw[opacity=0.2,
    densely dashed, ->] (h) -- node[text opacity=1, sloped] {$s_2$}
    (d);
    
    \draw[opacity=0.2, ->] (a) -- node[text opacity=1, sloped] {$u_{23}$} (b);
    \draw[opacity=0.2, ->] (c) -- node[text opacity=1, sloped] {$u_{12}$} (a);
    \draw[opacity=0.2, ->] (a) -- node[text opacity=1, sloped] {$u_{13}$} (d);
    \draw[opacity=0.2, densely dashed, ->] (b) -- node[text opacity=1, sloped] {$t_{12}$} (d);
    \draw[opacity=0.2, densely dashed, ->] (c) -- node[text opacity=1, sloped] {$t_{23}$} (d);
    \draw[opacity=0.2, densely dashed, ->] (c) -- node[text opacity=1, sloped] {$t_{13}$} (b);
    \draw[opacity=0.2, densely dashed, ->] (a) -- node[text opacity=1, sloped] {$s_4$} (h);
  \end{tikzpicture}

\caption{The parameter assignment for $R^{[0]}$.}
\label{fig:parameters}
\end{figure}
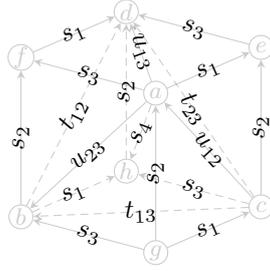

\begin{proposition}
  \label{prop:BTEs-cubes}
  For generic values of parameters, the above R-matrices satisfy the
  BTEs.
\end{proposition}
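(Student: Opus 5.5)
By the definition of a state integral model on shaped triangulations, it suffices to show the following: for each $\sigma$, the two shaped rhombic dodecahedra of BTE[$\sigma$] (built from the cubes of Figure~\ref{fig:BTE0-cubes}, resp.\ Figure~\ref{fig:BTE1-cubes}, with the gauge parameters $s_i,t_{ij},u_{ij}$) are equivalent. Equivalent means related by shaped $2$--$3$ moves and shape gauge transformations on internal edges. The partition functions, i.e.\ the two sides of \eqref{eq:BTE} with internal state variables integrated out, then agree.

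\textbf{Reduction to undeformed shapes.} The parameters enter only through shape gauge transformations. These are defined edge by edge and act identically on every tetrahedron containing the edge. On each side they therefore amount to one global gauge transformation of the whole dodecahedron, with the same parameter on each boundary edge on both sides. Parameters on internal edges can be absorbed freely, and shaped $2$--$3$ moves commute with gauge transformations on external edges. Hence it suffices to prove equivalence for the undeformed assignment (angles $\pi/2,\pi/4,\pi/4$ with $\pi/2$ on the thick edges) and then transport the gauge. One must check that the edge labelling really is consistent: an edge parallel to $da_i$ or $db_i$ gets $s_i$ in every cube containing it, and similarly for $t_{ij}$ and $u_{ij}$. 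This is a bookkeeping check on Figure~\ref{fig:parameters} and its analogues. The edges $da_i$ and $db_i$ themselves are internal on one side and boundary on the other, so care is needed there.

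\textbf{Explicit sequence of moves.} On each side the dodecahedron has $24$ tetrahedra. First, verify that the boundary matches: the face triangulations of the $12$ rhombi agree on both sides, including the diagonal chosen on each rhombus, and the dihedral angles summed over tetrahedra at each boundary edge agree. This is a necessary condition for equivalence. Next, exhibit a chain of $2$--$3$ and $3$--$2$ moves from the left triangulation to the right one. A natural route is to pass through a common intermediate triangulation in which the central vertex $d$ is ``coned'' symmetrically. For example, first eliminate the unbalanced body diagonals $da_i$ by $3$--$2$ moves wherever their total angle permits. Note that a $3$--$2$ move requires the internal edge to be balanced, and the body diagonals have angle $3\pi/2$, so these cannot be removed directly. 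Instead, perform $2$--$3$ moves on the internal cube faces that create new edges, such as $a_ia_j$ or $b_ib_j$, then $3$--$2$ moves on balanced edges. The aim is to transfer the $\pi/2$ deficit from the $da_i$ to the $db_i$, consistent with \eqref{eq:omega-da}.

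\textbf{Main obstacle.} The hard step is the shape bookkeeping along the chain. At every $2$--$3$ move the new internal edge must be balanced, every angle must stay positive (strictly between $0$ and $\pi$, summing to $\pi$ per tetrahedron), and each $3$--$2$ move must remove only a balanced edge of valence three. With the undeformed $\pi/2,\pi/4,\pi/4$ angles, some intermediate tetrahedra will likely be forced to have a zero or $\pi$ angle. This is why the statement says ``generic values of parameters'': I would use the gauge freedom on the edges that become temporarily internal to push the intermediate angles into the positive range. Equivalently, I would treat angles formally as linear functions of the parameters and invoke analytic continuation of the partition function. I would carry this out for BTE[0]. I would then deduce BTE[1] by the symmetry that exchanges the color labels, which maps the cubes of Figure~\ref{fig:BTE0-cubes} to those of Figure~\ref{fig:BTE1-cubes}, or else repeat the computation.
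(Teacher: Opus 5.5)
\textbf{There is a genuine gap.} You never exhibit the chain of moves, and both devices you propose for handling it fail: reducing to undeformed shapes, and curing degeneracies with gauge freedom on internal edges.

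\emph{The missing construction.} On each side of BTE[0] the $24$ tetrahedra group into six octahedra with vertex sets $\{a_i,a_j,b_k,b_l,c_m,d\}$, where $\{i,j\}\sqcup\{k,l\}=\{1,2,3,4\}$. Each octahedron is triangulated around the axis $b_kb_l$ on the left and around $a_ia_j$ on the right. The equivalence is therefore six independent octahedral moves. Each is a $2$--$3$ move, e.g.\ on $\{a_1,b_3,b_4,c_5\}\cup\{a_2,b_3,b_4,c_5\}$, which creates $a_1a_2$. It is followed by a $3$--$2$ move that removes $b_3b_4$; this edge is balanced, with angle $\pi/2$ in each of its four tetrahedra. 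No deficit is transferred between $da_i$ and $db_i$. All eight of these edges are internal on both sides (not boundary on one side, as you say), all have total angle $3\pi/2$ on both sides, and the moves never touch them.

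\emph{Where genericity enters.} Your reduction to undeformed shapes runs in the wrong direction. At the undeformed point, both candidate bipyramids in the octahedron (the one on the $c_5$ side and the one on the $d$ side) have total angle $\pi/2+\pi/2=\pi$ at the equatorial edge $b_3b_4$. The $2$--$3$ move would therefore produce a flat tetrahedron, so the undeformed configuration is itself degenerate. Nor can you transport a chain of moves along boundary gauge transformations without rechecking positivity.

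Gauge freedom on internal edges cannot cure this. In $\{a_1,b_3,b_4,c_5\}$ and $\{a_2,b_3,b_4,c_5\}$ the only internal edge is $b_3b_4$, and a gauge transformation on an edge never changes the dihedral angle at that edge. The contributions of $b_3c_5$ and $b_4c_5$ cancel between these two tetrahedra, because the total angle at $b_3b_4$ is invariant. Hence the relevant angle sum, for either bipyramid, is controlled only by the parameters on the boundary diagonals $a_1b_3,a_1b_4,a_2b_3,a_2b_4$, i.e.\ by $u_{13}+u_{24}-u_{23}-u_{14}$. ``Generic'' means this combination, and its analogues for the other octahedra, is nonzero. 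Its sign tells you which bipyramid to flip. Formal analytic continuation in the angles is not part of the framework, where invariance is only assumed under genuine shaped moves, and it is not needed.

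\emph{BTE[1].} It cannot be obtained from BTE[0] by exchanging colours, because the two equations are structurally different. In BTE[1] both sides have the same triangulation: the cone from $d$ over the rhombi split along their long diagonals $c_mc_n$. The shapes differ only by gauge transformations on internal edges, so BTE[1] is immediate. In BTE[0] the rhombi are split along the short diagonals $a_ib_k$, and the two sides differ combinatorially.
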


\begin{proof}
  For BTE[1], the rhombic dodecahedra on the two sides are equivalent
  because their shaped triangulations differ only by shape gauge
  transformations on internal edges.


  For BTE[0], each rhombic dodecahedron consists of six shaped
  octahedra whose vertices are sets of the form
  $\{a_i, a_j, b_k, b_l, c_m, d\}$, with
  $\{i,j\} \sqcup \{k, l\} = \{1,2,3,4\}$.
  The corresponding octahedra on the two sides are related by two
  shaped $2$--$3$ moves, hence the two rhombic dodecahedra are
  equivalent.

  To be more concrete, let us consider the octahedron consisting of
  four tetrahedra with vertices $\{a_1, b_3, b_4, c_5\}$,
  $\{a_2, b_3, b_4, c_5\}$, $\{a_1, b_3, b_4, d\}$ and
  $\{a_2, b_3, b_4, d\}$ in the rhombic dodecahedron on the left-hand
  side.  Depending on whether $u_{13} + u_{24} - u_{23} -u_{14}$ is
  positive or negative, we can apply the $2$--$3$ move to the
  bipyramid consisting of the first two tetrahedra or the last two
  tetrahedra.  Consider the former case.  The $2$--$3$ move produces
  three tetrahedra with vertices $\{a_1, a_2, b_3, b_4\}$,
  $\{a_1, a_2, b_3, c_5\}$, $\{a_1, a_2, b_4, c_5\}$.  Then, we apply
  the $3$--$2$ move to the bipyramid consisting of
  $\{a_1, a_2, b_3, b_4\}$, $\{a_1, b_3, b_4, d\}$,
  $\{a_2, b_3, b_4, d\}$ to obtain four tetrahedra
  $\{a_1, a_2, b_3, c_5\}$, $\{a_1, a_2, b_4, c_5\}$,
  $\{a_1, a_2, b_3, d\}$, $\{a_1, a_2, b_4, d\}$, which comprise the
  corresponding octahedron on the right-hand side.
\end{proof}

If we choose $t_{ij} = t(t_i, t_j)$, $u_{ij} = u(u_i, u_j)$ for some
functions $t$, $u$ of two variables, then the triplet
$r_i = (s_i, t_i, u_i)$ may be thought of as a parameter assigned to
the $i$-th plane.  However, it cannot be interpreted as a spectral
parameter because, as a gauge transformation parameter, it cannot
change the partition function on a periodic cubic lattice, unless the
boundary condition is modified to break shape gauge invariance.

\section{Solution by Teichm\"uller TQFT}
\label{sec:Teichmuller}

In this section, we discuss a solution to the BTEs produced by
Teichm\"uller TQFT, which is a state integral model introduced by
Andersen and Kashaev \cite{MR3227503} and built on earlier works,
including \cite{Hikami:2001en, MR2330673, MR3250765, MR2551896,
  Dijkgraaf:2010ur}.

\subsection{Teichm\"uller TQFT}

Let $\bb$ be a complex number such that $\Re\bb \neq 0$ and
\begin{equation}
  c_\bb := \frac{i}{2} (\bb+\bb^{-1}) \in i\R \,,
\end{equation}
and let $\pp_i$, $\qq_i$ be position and momentum operators obeying
the canonical commutation relations
\begin{equation}
  [\pp_i, \pp_j] = [\qq_i, \qq_j] = 0 \,,
  \qquad
  [\pp_i, \qq_j] = \frac{1}{2\pi i} \delta_{i,j} \,.
\end{equation}
Using Faddeev's noncompact quantum dilogarithm $\Phi_\bb$, define the
\emph{charged tetrahedral operator}
\begin{equation}
  \TT_{ij}(a,c)
  :=
  e^{2\pi i \pp_i\qq_j} \psi_{a,c}(\qq_i - \qq_j + \pp_j) \,,
\end{equation}
where $a$, $c \in \R_{>0}$ and
\begin{equation}
  \psi_{a,c}(x)
  :=
  \Phi_\bb\bigl(x - 2c_\bb(a+c)\bigr)^{-1}
  e^{-4\pi ic_\bb a(x - c_\bb(a+c))}
  e^{-\pi ic_\bb^2 (4(a-c)+1)/6} \,.
\end{equation}
It satisfies the identity
\begin{equation}
  \label{eq:CPI}
  \TT_{ij}(a_4,c_4)\TT_{ik}(a_2,c_2)\TT_{jk}(a_0,c_0)
  = e^{\pi ic_\bb^2 P_e/3} \TT_{jk}(a_1,c_1)\TT_{ij}(a_3,c_3) \,,
\end{equation}
where $P_e = 2(c_0+a_2+c_4) - 1/2$ and
\begin{equation}
  a_1 = a_0 + a_2 \,,
  \
  a_3 = a_2 + a_4 \,,
  \
  c_1 = c_0 + a_4 \,,
  \
  c_3 = a_0 + c_4 \,,
  \
  c_2 = c_1 + c_3 \,.
\end{equation}




In the original formulation of Teichm\"uller TQFT \cite{MR3227503},
state variables are located on the faces of shaped triangulations.
Let $x_v$, $v = 0$, $1$, $2$, $3$, be the state variable on the face
opposite to a vertex $v$ of a tetrahedron with ordered vertices and
dihedral angles $(\alpha, \beta, \gamma) = (2\pi a, 2\pi b, 2\pi c)$.
The Boltzmann weight assigned to this state configuration is given by
the expectation value
\begin{equation}
  \TT(a,c)_{x_1 x_3}^{x_0 x_2} := \bra{x_0, x_2} \TT(a,c) \ket{x_1, x_3}\,,
\end{equation}
or
\begin{equation}
  \TTb(a,c)_{x_0 x_2}^{x_1 x_3}
  :=
  \overline{\bra{x_0, x_2} \TT(a,c) \ket{x_1, x_3}} \,,
\end{equation}
depending on whether the tetrahedron is positive or negative.
Explicitly, we have%
\footnote{The R-matrix depends on the gauge parameter assigned to the
  body diagonal edges through a phase factor.  Such phase factors
  cancel in the BTEs and can be omitted from the definitions of the
  R-matrices.}
\begin{equation}
  \bra{x_0, x_2} \TT(a,c) \ket{x_1, x_3}
  =
  \delta(x_0+x_2-x_1)\psit'_{a,c}(x_3-x_2) e^{2\pi ix_0(x_3-x_2)} \,,
\end{equation}
where
\begin{equation}
  \psit'_{a,c}(x) := e^{-\pi ix^2} \int_\R \psi_{a,c}(y) e^{-2\pi ixy} dy \,.
\end{equation}

Teichm\"uller TQFT has the following properties:
\begin{itemize}
\item \emph{Shape gauge invariance}. Under a shape gauge
  transformation with parameter $\theta$ performed on an internal edge
  $e$ that has a total angle $\omega(e)$ shared by $n$ tetrahedra, the
  partition function gets multiplied by the factor
  \begin{equation}
    \label{eq:SGI-Teichmuller}
    e^{ic_\bb^2 \theta(n/3 - \omega(e)/\pi)} \,.
  \end{equation}

\item \emph{Tetrahedral symmetry}. There exist cones over bigons,
  which can be attached to a tetrahedron to change the vertex ordering.
  As a result, the partition function on a closed shaped pseudo
  $3$-manifold is independent of the choice of vertex ordering of
  tetrahedra.

\item \emph{Invariance under shaped $2$--$3$ moves}.  The identity
  \eqref{eq:CPI} expresses the invariance, up to a phase, of the
  partition function of the bipyramid in Figure~\ref{fig:2-3} under the
  shaped $2$--$3$ move.  By the tetrahedral symmetry, the invariance
  under $2$--$3$ moves holds for any vertex ordering of the bipyramid.
\end{itemize}
Therefore, the partition functions of Teichm\"uller TQFT on equivalent
shaped pseudo $3$-manifolds are equal up to a phase.

\subsection{Teichm\"uller TQFT solves the BTEs}

The R-matrices produced by Teichm\"uller TQFT carry real state
variables placed on the external faces of the shaped cubes.  As such,
they are of vertex type, with each edge carrying a pair of state
variables.

Explicitly, the matrix elements of $R^{[0]}$ are given by
\begin{equation}
  \begin{aligned}
    R^{[0]}&_{(x_3, \xb_3) (x_2, \xb_2) (x_1, \xb_1)}^{(x'_3, \xb'_3) (x'_2, \xb'_2) (x'_1, \xb'_1)}
    \\
    = \int & dy_1 \, dy_2 \, dy_3 \, dz_1 \, dz_2 \, dz_3
    \\
    &
    \TT\bigl(\tfrac18 + \tfrac{1}{2\pi}(s_2 - s_3 + u_{13} - t_{12}),
    \tfrac18 + \tfrac{1}{2\pi}(s_3 - s_1 + t_{12} - u_{23})\bigr)_{x_2 x_3}^{x'_1 y_1}
    \\
    \times &
    \TTb\bigl(\tfrac18 + \tfrac{1}{2\pi}(s_4 - s_1 + t_{12} - u_{13}),
    \tfrac18 + \tfrac{1}{2\pi}(s_2 - s_4 - t_{12} + u_{23})\bigr)_{\xb'_1 y_1}^{z_1 z_2}
    \\
    \times &
    \TTb\bigl(\tfrac18 + \tfrac{1}{2\pi}(s_2 - s_3 + t_{13} - u_{12}),
    \tfrac18 + \tfrac{1}{2\pi}(s_1 - s_2 - t_{13} + u_{23})\bigr)_{y_2 \xb'_2}^{\xb_3 \xb_1}
    \\
    \times &
    \TTb\bigl(\tfrac18 + \tfrac{1}{2\pi}(s_4 - s_3 + t_{13} - u_{23}),
    \tfrac18 + \tfrac{1}{2\pi}(s_1 - s_4 + u_{12} - t_{13})\bigr)_{z_2 z_3}^{x'_2 y_2}
    \\
    \times &
    \TT(\tfrac18 + \tfrac{1}{2\pi}(s_1 - s_2 + t_{23} - u_{13}),
    \tfrac14 + \tfrac{1}{2\pi}(s_2 - s_3 + u_{13} - u_{12})\bigr)_{\xb'_3 x_1}^{\xb_2 y_3}
    \\
    \times &
    \TTb\bigl(\tfrac18 + \tfrac{1}{2\pi}(s_3 - s_4 + u_{13} - t_{23}),
    \tfrac14 + \tfrac{1}{2\pi}(s_2 - s_3 + u_{12} - u_{13})\bigr)_{z_1 y_3}^{x'_3 z_3}
    \,.
  \end{aligned}
\end{equation}
This is proportional to the delta function
\begin{equation}
  \delta(\xb'_1 - x'_1 + \xb'_3  - x'_3 + x_2 - \xb_2) \,,
\end{equation}
which expresses the conservation law
$x_2 - \xb_2 = x'_1 - \xb'_1 + x'_3 - \xb'_3$. 

Similarly, the matrix elements of $R^{[1]}$ are given by
 \begin{equation}
  \begin{aligned}
    R^{[1]}&_{(x_3, \xb_3) (x_2, \xb_2) (x_1, \xb_1)}^{(x'_3, \xb'_3) (x'_2, \xb'_2) (x'_1, \xb'_1)}
    \\
    = \int & dy_1 \, dy_2 \, dy_3 \, dz_1 \, dz_2 \, dz_3
    \\
    &
    \TTb\bigl(\tfrac18 + \tfrac{1}{2\pi}(s_2-s_1+u_{13}-t_{23}),
    \tfrac18 + \tfrac{1}{2\pi}(s_3-s_2+u_{12}-u_{13})\bigr)^{y_2 \xb'_2}_{x'_1 x_3}
    \\
    \times &
    \TT\bigl(\tfrac18 + \tfrac{1}{2\pi}(s_4-s_3+t_{23}-u_{13}),
    \tfrac18 + \tfrac{1}{2\pi}(s_3-s_2 + u_{13}- u_{12})\bigr)^{z_1 \xb_3}_{y_2 z_3}
    \\
    \times &
    \TT\bigl(\tfrac18 + \tfrac{1}{2\pi}(s_3-s_2+u_{12}-t_{13}),
    \tfrac18 + \tfrac{1}{2\pi}(s_2-s_1+t_{13}-u_{23})\bigr)^{\xb'_1 x'_3}_{x_2 y_1}
    \\
    \times &
    \TT\bigl(\tfrac18 + \tfrac{1}{2\pi}(s_3-s_4+u_{23}-t_{13}),
    \tfrac18 + \tfrac{1}{2\pi}(s_4-s_1+t_{13}-u_{12})\bigr)^{y_1 \xb_2}_{z_1 z_2}
    \\
    \times &
    \TTb(\tfrac18 + \tfrac{1}{2\pi}(s_3-s_2+t_{12}-u_{13}),
    \tfrac14 + \tfrac{1}{2\pi}(s_1-s_3+u_{23}-t_{12})\bigr)^{y_3 x_1}_{\xb'_3 x'_2}
    \\
    \times &
    \TT\bigl(\tfrac18 + \tfrac{1}{2\pi}(s_1-s_4+u_{13}-t_{12}),
    \tfrac14 + \tfrac{1}{2\pi}(s_4-s_2+t_{12}-u_{23})\bigr)^{z_2 z_3}_{y_3 \xb_1}
    \,.
  \end{aligned}
\end{equation} 
This is proportional to the delta function
\begin{align}
    \delta(\xb'_1+x'_3-x_2)\,,
\end{align}
which expresses the conservation law $x_2=\xb'_1+x'_3$.

A priori, the R-matrices produced by Teichm\"uller TQFT are guaranteed
to solve the BTEs only up to a phase since the equality of partition
functions between equivalent shaped pseudo $3$-manifolds holds only up
to a phase.  Thus, the left-hand side of BTE[$\sigma$] is equal to the
right-hand side of BTE[$\sigma$], multiplied by a phase factor
$e^{i\Delta_\sigma}$ which may depend on the parameters $s_i$,
$t_{ij}$, $u_{ij}$.  It turns out that the BTEs are exactly satisfied
for Teichm\"uller TQFT:

\begin{proposition}
  $e^{i\Delta_\sigma} = 1$.
\end{proposition}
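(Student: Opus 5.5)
The plan is to track the phases explicitly through the equivalences already established in the proof of Proposition~\ref{prop:BTEs-cubes}. Only two sources of phases exist: the shape gauge transformations on internal edges, governed by \eqref{eq:SGI-Teichmuller}, and the shaped $2$--$3$ moves, governed by \eqref{eq:CPI} together with the bigon cones used to change vertex orderings. I will compute $\Delta_\sigma$ as the sum of the phases produced along the chain of moves. The goal is to show that this sum is zero, modulo $2\pi$, identically in $s_i,t_{ij},u_{ij}$.

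For BTE[1] the two rhombic dodecahedra differ only by shape gauge transformations on internal edges. By \eqref{eq:SGI-Teichmuller}, each such transformation with parameter $\theta$ on an internal edge $e$ contributes $c_\bb^2\theta(n_e/3-\omega(e)/\pi)$. I will list the internal edges of the dodecahedron: the edges $dc_m$, the body diagonals $da_i$ or $db_i$, and the internal face diagonals. For each of them I record $n_e$ and $\omega(e)$ on both sides, and the gauge parameter relating the two sides. The body diagonals have $\omega=3\pi/2$. I will then check that the weighted sum cancels. There are two natural mechanisms: either $n_e/3=\omega(e)/\pi$ on the relevant edges, or contributions from edges carrying the same parameter (for example $s_i$ on $da_i$ versus $db_i$) cancel in pairs. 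Separately, I must confirm the footnoted claim that the phases depending on the body-diagonal gauge parameter, which were dropped from the R-matrix definitions, cancel between the two sides. Each side contains each $s_i$ once in this role, so I expect these to match.

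For BTE[0], each of the six octahedra $\{a_i,a_j,b_k,b_l,c_m,d\}$ undergoes one $2$--$3$ move followed by one $3$--$2$ move. Each move contributes $\pm\pi c_\bb^2P_e/3$, where $P_e=2(c_0+a_2+c_4)-1/2$ is read off from the angles of the tetrahedra in the bipyramid, suitably reordered. There is also a gauge phase from fixing the shapes of the intermediate tetrahedra.

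The main obstacle is the bookkeeping. The tetrahedra come with various vertex orderings and orientations, so applying \eqref{eq:CPI} requires the tetrahedral-symmetry cones, which may themselves contribute phases. I would first reduce to a single octahedron, namely the one worked out explicitly in the proof of Proposition~\ref{prop:BTEs-cubes}, and compute its two $P_e$'s. The key expectation is that $P_e$ for the $2$--$3$ move and for the inverse $3$--$2$ move are equal. Both involve the same new internal edge $a_1a_2$, which is created and then destroyed, so their phases should cancel exactly, up to a linear combination of gauge parameters.

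I will then use the symmetry of the configuration, permuting the planes $1$--$4$, to argue that the residual linear terms in $u_{ij}$ from the six octahedra sum to zero. Each $u_{ij}$ appears with opposite signs in complementary octahedra. The constants ($\tfrac18$, $\tfrac14$, $-\tfrac12$) cancel because every angle is a multiple of $\pi/4$ and the moves come in inverse pairs. If this pairing fails, the fallback is to compute $\Delta_0$ directly as an explicit linear function and verify that it vanishes.
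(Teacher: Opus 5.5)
There is a genuine gap, and it sits in BTE[0]. Your proposal does not yet prove anything: every decisive cancellation is only expected. The one mechanism you offer for BTE[0] also rests on a misreading of the moves.

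In the octahedron $\{a_1,a_2,b_3,b_4,c_5,d\}$, the $2$--$3$ move acts on the bipyramid with vertices $\{a_1,a_2,b_3,b_4,c_5\}$ and creates $a_1a_2$. That edge survives as the diagonal of the right-hand octahedron. The $3$--$2$ move acts on a different bipyramid, with vertices $\{a_1,a_2,b_3,b_4,d\}$, and removes $b_3b_4$. So the two moves are not inverse to each other. Their phases are determined by the shapes of two different bipyramids, and nothing in your argument makes them agree for generic $s_i,t_{ij},u_{ij}$.

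Your fallback claims do not close this:
\begin{itemize}
\item The four cubes carry different triangulations, so there is no evident symmetry permuting the planes that would make each $u_{ij}$ appear with opposite signs in complementary octahedra.
\item The $s_i$ and $t_{ij}$ dependence is not addressed at all.
\item Angles being multiples of $\pi/4$ does not make the constant terms cancel.
\end{itemize}
You also leave unresolved the two technical points you flag yourself. Tetrahedral symmetry, and hence the vertex-ordering cones, is only asserted for closed manifolds, not for the dodecahedron with boundary. And the choice of which bipyramid to flip depends on the sign of $u_{13}+u_{24}-u_{23}-u_{14}$.

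The missing idea is to avoid generic-parameter bookkeeping entirely. Because $s_i$, $t_{ij}$, $u_{ij}$ were introduced as shape gauge parameters, embed the two dodecahedra of BTE[$\sigma$] in closed shaped pseudo $3$-manifolds that agree outside the dodecahedron, so every edge is internal. Shifting any parameter then multiplies both partition functions by the same factor \eqref{eq:SGI-Teichmuller}. Hence $e^{i\Delta_\sigma}$ is independent of the parameters, and it suffices to evaluate it at $u_{ij}=u_{kl}$.

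At that point:
\begin{itemize}
\item The two sides of BTE[1] are literally the same shaped triangulation.
\item For BTE[0], the octahedron $\{a_i,a_j,b_k,b_l,c_m,d\}$ on the left and the octahedron $\{b_j,b_i,a_l,a_k,d,c_m\}$ on the right have exactly the same shaped triangulation, under the relabeling that swaps the $a$'s with the $b$'s and $c_m$ with $d$. This symmetry matches the $2$--$3$ move on the former with the $3$--$2$ move producing the latter, so their phases cancel.
\end{itemize}
That symmetry is the real reason behind the cancellation you anticipated, and it is only available at this special point.

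Your direct gauge-phase computation for BTE[1] could in principle be carried out. The footnoted body-diagonal phases do match, since each side has each $s_i$ on exactly one body diagonal. It becomes unnecessary once parameter independence is established.
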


\begin{proof}
  Let us embed each of the shaped rhombic dodecahedra representing the
  two sides of BTE[$\sigma$] into a larger shaped pseudo $3$-manifold
  so that all edges become internal.  We do this in such a way that
  the obtained shaped pseudo $3$-manifolds differ only in the shaped
  triangulations of the embedded rhombic dodecahedra; therefore, their
  partition functions differ by the same phase factor
  $e^{i\Delta_\sigma}$.

  Recall that the parameters $s_i$, $t_{ij}$, $u_{ij}$ were introduced
  by shape gauge transformations.  Thus, we can shift any of them by a
  shape gauge transformation, and if we do so, the partition functions
  of the above two shaped pseudo $3$-manifolds are multiplied by phase
  factors according to \eqref{eq:SGI-Teichmuller}.  It is easy to check
  that these phase factors coincide.  Therefore, $e^{i\Delta_\sigma}$
  is independent of the parameters.

  Let us consider the case in which $u_{ij} = u_{kl}$ for
  $\{i,j\} \sqcup \{k, l\} = \{1,2,3,4\}$.

  For BTE[1], the two shaped rhombic dodecahedra have identical shaped
  triangulations in this case. Hence, $e^{i\Delta_1} = 1$.

  For BTE[0], the bipyramid with vertices
  $\{a_i, a_j, b_k, b_l, c_m, d\}$ on the left-hand side and the
  bipyramid with vertices $\{b_j, b_i, a_l, a_k, d, c_m\}$ on the
  right-hand side have exactly the same shaped triangulation.
  Therefore, during the sequence of $2$--$3$ and $3$--$2$ moves that
  transforms the left-hand side to the right-hand side, the phase
  factor produced by the $2$--$3$ move on the former is canceled by
  the phase factor produced by the $3$--$2$ move to obtain the latter.
\end{proof}

\section{Conclusions}
\label{sec:conclusions}

In this work, we presented a construction of three-dimensional
bicolored lattice models using line defects in state integral models
on shaped pseudo $3$-manifolds.  The R-matrices of these models solve
the BTEs.

We have seen, however, that there are some obstacles to establishing
the integrability of these models.  Most notably, the parameters of
these R-matrices are introduced by shape gauge transformations, and
consequently the layer transfer matrix with simple periodic boundary
conditions lacks spectral parameters.  Twisting the boundary
conditions with symmetries of the R-matrices can introduce parameter
dependence to the transfer matrix (as done in \cite{Inoue:2025xzj}),
but how exactly this is done depends on the specific state integral
model used.

Even if spectral parameters could be introduced into the transfer
matrices, the R-matrices and the transfer matrices must still satisfy
additional conditions in order for the lattice models to be
integrable, as explained in section~\ref{sec:BLM}.

That said, we believe that the lattice models constructed in this work
are interesting whether or not they are integrable.  One reason is
that their R-matrices satisfy the BTEs, which we expect to possess
rich algebraic structures generalizing the quantum algebras in the
case of the Yang--Baxter equation.  Another reason is that
Teichm\"uller TQFT is thought to capture aspects of three-dimensional
quantum gravity, so the associated lattice model should also admit a
gravity interpretation.  We leave the exploration of these directions
for future work.

\acknowledgments{The authors are grateful to Hyun Kyu Kim and Nicolai Reshetikhin for
valuable discussions. This work is supported by the National Natural
Science Foundation of China (NSFC) under Grant No.~12375064. MS is
also supported in part by the Beijing Natural Science Foundation
(BJNSF) under Grant No.~IS24010 and by the Shuimu Scholar Program of
Tsinghua University. XS is supported by the Beijing Natural Science
Foundation (BJNSF) under Grant No.~IS25024.}

\bibliographystyle{JHEP}
\bibliography{Bibliography}

\end{document}